\newtheorem{Prop}{Proposition}[section]
\newtheorem{Lemm}[Prop]{Lemma}
\newtheorem{Th}[Prop]{Theorem}
\newtheorem{Cor}[Prop]{Corollary}
\newtheorem{Folk}[Prop]{Folklore}
\theoremstyle{definition}
\newtheorem{Def}[Prop]{Definition}
\newcommand{\BS}{\mathop{\multimap}}
\newcommand{\SL}{\mathop{\mbox{\raisebox{.5em}{\rotatebox{180}{$\multimap$}}}}}
\newcommand{\Bs}{\mathop{\backslash}}
\newcommand{\Sl}{\mathop{/}}
\newcommand{\ncplus}{\boxplus}
\newcommand{\yields}{\vdash}
\newcommand{\aconj}{\mathop{\&}}
\newcommand{\adisj}{\oplus}
\newcommand{\mconj}{\otimes}
\newcommand{\KStar}{{}^*}
\newcommand{\U}{\mathbf{1}}
\newcommand{\Z}{\mathbf{0}}
\newcommand{\ID}{\mathrm{id}}
\newcommand{\PERM}{\mathrm{perm}}
\newcommand{\WEAK}{\mathrm{weak}}
\newcommand{\NCONTR}{\mathrm{ncontr}}
\newcommand{\CUT}{\mathrm{cut}}
\newcommand{\NN}{\mathbb{N}}
\newcommand{\Ic}{\mathcal{I}}
\newcommand{\Wc}{\mathcal{W}}
\newcommand{\Cc}{\mathcal{C}}
\newcommand{\Ec}{\mathcal{E}}
\newcommand{\Nc}{\mathcal{N}}
\newcommand{\Dcut}{\mathscr{D}}
\newcommand{\Dnocut}{\hat{\mathscr{D}}}
\newcommand{\Scut}{\mathcal{S}}
\newcommand{\Snocut}{\hat{\mathcal{S}}}
\newcommand{\ACTomega}{\mathrm{ACT}_\omega}
\newcommand{\eACTomega}{{!}\ACTomega}
\begin{document}

\title{Infinitary Action Logic with Exponentiation}

\author{Stepan L.\ Kuznetsov \and Stanislav O.~Speranski}


\maketitle


\begin{abstract}
 We introduce infinitary action logic with exponentiation---that is, the multiplicative-additive Lambek calculus extended with Kleene star and with a family of subexponential modalities, which allows some of the structural rules (contraction, weakening, permutation). The logic is presented in the form of an infinitary sequent calculus. We prove cut elimination and, in the case where at least one subexponential allows non-local contraction, establish exact complexity boundaries in two senses. First, we show that the derivability problem for this logic is $\Pi_1^1$-complete. Second, we show that the closure ordinal of its derivability operator is $\omega_1^{\mathrm{CK}}$. In the case where no subexponential allows contraction, we show that complexity is the same as for infinitary action logic itself. Namely, the derivability problem in this case is $\Pi^0_1$-complete and the closure ordinal is not greater than $\omega^\omega$.
\end{abstract}

%
%


\section{Linguistic introduction}

The {\em Lambek calculus} was introduced in \cite{Lambek-1958} as a logical background for describing~na\-tu\-ral language syntax.
Lambek's approach was built upon earlier ideas of {\em categorial grammar} from~\cite{Ajdukiewicz-1935} and \cite{Bar-Hillel-1953}.

The two principal operations
of the Lambek calculus are two {\em divisions,} left and right. Left division, $A \Bs B$ (``$B$ divided by $A$ to the left,'' or
``$A$ under $B$'' for short), denotes the set of words which, being prefixed with any word from $A$, form words of $B$. For example,
if $NP$ is the language of all English noun phrases (like ``John'' or ``the red ball'') and $S$ includes all grammatically correct English sentences,
then $NP \Bs S$ includes the language of intransitive verbs. Indeed, if $v$ is an intransitive verb, then for any noun phrase $n$ the
concatenation $nv$ is a valid sentence, like ``John runs,'' for example.

The right division, $B \Sl A$ (``$B$ divided by $A$ to the right,'' or ``$B$ over $A$'' for short), is defined symmetrically. Thus, for example,
$(NP \Bs S) \Sl NP$ includes the language of transitive verbs. A transitive verb requires two noun phrases, one from each side, to become a complete sentence
({\em e.g.,} ``John loves Mary''). Formulae like $(NP \Bs S) \Sl NP$, which define languages in terms of basic ones (like $NP$ and $S$), are called {\em syntactic types.}

Going further (see, for example, Carpenter's textbook~\cite{Carpenter}), one defines syntactic types for other parts of speech:
\begin{center}
\begin{tabular}{l@{\quad$\rhd$\quad}l}
 common noun (``book,'' ``paper,'' ``girl,'' ...) & $N$ \\[3pt]
 noun phrase (``John,'' ``Mary,'' ``the book,'' ``a girl,'' ...) & $NP$ \\[3pt]
 article (``the,'' ``a'') & $NP \Sl N$ \\[3pt]
 transitive verb (``loves,'' ``signed,'' ...) & $(NP \Bs S) \Sl NP$ \\[3pt]
 intransitive verb (``runs,'' ``sleeps,'' ...) & $NP \Bs S$ \\[3pt]
 adjective (``red,'' ``interesting,'' ...) & $N \Sl N$ \\[3pt]
 adverb (``very,'' ``extremely,'' ...) & $(N \Sl N) \Sl (N \Sl N)$
\end{tabular}
\end{center}
The new basic syntactic type $N$ for common nouns is needed since English distinguishes them from noun phrases. Thus, only a noun phrase can be used as a subject, but, on the other hand, adjectives act as modifiers for common nouns, not noun phrases. A common noun gets transformed into a noun phrase by an article.

Let $A \preccurlyeq B$ mean that $B$ is a broader syntactic type than $A$.  Then Lambek's division operations obey the following conditions (product, $\cdot$,
means concatenation):
$$
A \preccurlyeq C \Sl B \iff A \cdot B \preccurlyeq C \iff B \preccurlyeq A \Bs C
$$
which, along with the associativity of product, and the reflexivity and transitivity of $\preccurlyeq$, form the Lambek calculus in its
non-sequential form.

The interpretation of Lambek divisions sketched above can be made formal by considering so-called \emph{language models} (or \emph{L-models})
for the Lambek calculus.
We fix an alphabet $\Sigma$ and interpret Lambek formulae as formal languages over this alphabet. Multiplication corresponds to pairwise
concatenation, and division operations are intepreted as follows:
\begin{align*}
{A \Bs B}\ &=\ {\{ u \mid (\forall v \in A) \, vu \in B\}},\\
{B \Sl A}\ &=\ {\{ u \mid (\forall v \in A) \, uv \in B\}} .
\end{align*}

Notice that the original Lambek calculus features the so-called \emph{Lambek non-emptiness restriction}. In L-models it corresponds to excluding
the empty word from all the languages considered. In particular, this is important in definitions of division operations on languages.
If one allows the empty word, it will be always included into $A \Sl A$, which yields $B \Sl (A \Sl A) \preccurlyeq B$.  An unwanted linguistic
consequence of this principle is $(N \Sl N) \Sl (N \Sl N) \preccurlyeq N \Sl N$, which yields $((N \Sl N) \Sl (N \Sl N)) \cdot N \preccurlyeq N$
and validates ``extremely book'' as a valid noun phrase (cf. ``extremely interesting book,'' which is correct and analyzed as
$((N \Sl N) \Sl (N \Sl N)) \cdot (N \Sl N) \cdot N \preccurlyeq N$). In other words, the empty word acts as an adjective (like ``interesting'') and
accepts an adverb ``extremely.'' This is unacceptable by English grammar. (This example was taken from~\cite{MootRetore-2012}.)

In the view of the above, Lambek's non-emptiness restriction is a desirable feature of a logical system underlying categorial grammars.
Unfortunately, as shown in~\cite{KKS-2016-arXiv}, this restriction conflicts with one of the extensions of the Lambek calculus we study
in this paper, namely, (sub)exponential modalities (see below)---so we opt for the system without Lambek's restriction. This system is
called \emph{the Lambek calculus allowing empty antecedents}~\cite{Lambek-1961} and is a fragment of the broader system ${!}\mathrm{ACT}_\omega$,
which is defined in the next section.


As shown in~\cite{Pentus-1993}, Lambek grammars can describe only context-free languages.
From the linguistical point of view, it is a serious limitation, since there exist natural language
phenomena that go beyond the context-free framework
(a formally justified example, based on Swiss German material, was provided in~\cite{Shieber-1985}).
The pursuit of expanding expressive capabilities of Lambek grammars motivates the study of various extensions and variations of the original Lambek system.

The first natural extension of the Lambek calculus is the so-called ``full,'' or multiplicative-additive Lambek calculus obtained by adding
{\em additive} conjunction and disjunction, which correspond to intersection and union. This increases the expressive power of Lambek grammars:
with additives, they can describe finite intersections of context-free languages~\cite{Kanazawa-1992} and even a broader class
of languages generated by conjunctive context-free grammars~\cite{Kuznetsov-2013,KuznetsovOkhotin-2017}. No non-trivial upper bounds are known for the
class of languages generated by Lambek grammars with additives.

Notice that in the presence of both additives completeness w.r.t. L-models fails, due to the distributivity law for additives.
On one hand, the distributivity law is true
 under set-theoretic interpretation of additive conjunction and disjunction. On the other hand, as noticed by Ono and Komori~\cite{OnoKomori-1985}, it is not derivable without using contraction, in particular, in the multiplicative-additive
Lambek calculus. 

From the modern point of view~\cite{Abrusci-1990}, the Lambek calculus can be viewed as an intuitionistic and non-commutative variant of
Girard's linear logic~\cite{Girard-1987}. From this point of view, Lambek divisions $\Bs$ and $\Sl$ become directed implications;
throughout this paper we denote them by $\BS$ and $\SL$. Multiplication, $\cdot$, corresponds to multiplicative conjunction
(``tensor,'' $\mconj$); intersection and union are additive conjunction and disjunction respectively. Multiplicative disjunction
(``par,'' $\raisebox{.7em}{\rotatebox{180}{\&}}$) is not included into the Lambek calculus, because it does not have natural
linguistic meaning. Being a substructural system, the Lambek calculus lacks all structural rules (weakening, permutation, and contraction),
except implicit associativity. Structural rules, however, can be restored in a controlled way using the exponential modality (also borrowed
from linear logic) and its weaker variants, called \emph{subexponentials} (see below).


In our paper we focus on the extension of the Lambek calculus by means of iteration (Kleene star) and subexponential modalities.
Elements of linguistic analysis here are mostly based on the categorial grammar framework from~\cite{MorVal-2015}
  developed for the CatLog
parser.

Kleene star is one of the standard operations on formal languages, thus it looks very natural to introduce it into L-models and, thus,
into the Lambek calculus.  A linguistic usage of Kleene star is shown in~\cite{MorVal-2015} (where it is denoted by ${?}$ and called
``existential exponential''): in ``John, Bill, Mary, and Suzy'' the coordinator ``and'' receives syntactic type $(NP^* \Bs NP) \Sl NP$.
The Kleene star, as shown in the next section, is axiomatized by means of an $\omega$-rule, which raises algorithmic complexity of the
system to very high levels.
Morrill and Valent\'{\i}n, however, in order to avoid undecidability, formulate an incomplete set of rules for Kleene star.

Historically, Kleene star first appeared in the study of events, or actions within a transition system: in the original paper~\cite{Kleene}
it was used when describing events in neural networks.
If $A$ denotes a class of actions, then $A^*$ means actions of class $A$ repeated several (possibly zero) times;
\cite{Pratt-1991} proposes {\em action algebras,} an extension of
Kleene algebras with residuals. Though Pratt's work was independent from Lambek, these residuals actually coincide with Lambek divisions.
In the presence of residuals, usage of infinitary systems for axiomatizing Kleene star becomes inevitable, due to complexity reasons
(see below).\footnote{It
should be remarked that L-completeness of the Lambek calculus with Kleene star (without additives) is still an open problem.}

The second family of connectives we use to extend the Lambek calculus is the family of subexponential modalities, or subexponentials for short.
Their linguistic motivation, going back to Morrill, is as follows. The Lambek calculus itself has a limited capability of treating relativization,
or dependent clauses. For example, ``that'' in ``book that John read'' gets syntactic type $(N \Bs N) \Sl (S \Sl NP)$, because the dependent clause ``John read'' lacks a noun phrase (``John read the book'') to become a complete sentence ($S$).
The place where the lacking noun phrase should be placed is called a {\em gap:} ``John read [].''
In more complicated situations, however, this does not work: in the phrase ``book that John read yesterday'' the dependent clause
``John read yesterday'' has a gap in the middle: ``John read [] yesterday,'' and is neither of type $S \Sl NP$, nor $NP \Bs S$.
This syntactic phenomenon is called {\em medial extraction} and can be handled
 by adding a special modality, denoted by ${!}$, which allows permutation rules. Now ``that'' receives syntactic type
$(N \Bs N) \Sl (S \Sl {!}NP)$, and ``John read [] yesterday'' is indeed of type $S \Sl {!}NP$, since by permutation ${!}NP$ reaches its place to fill the gap.

There is a more sophisticated phenomenon called {\em parasitic extraction:} in the example ``paper that John signed without reading''
the dependent clause includes two gaps: ``John signed [] without reading []'' which should both be filled with the same instance of $NP$ (``the paper'').
This is handled by the contraction rule which allows duplicating ${!}NP$.


Compared to the exponential connective in Girard's linear logic, the ${!}$ connective discussed above allows permutation and contraction,
but not weakening, since weakening would give linguistically invalid examples like ``book that John loves Mary'' (``John loves Mary'' has type $S$ and by
weakening would have recieved type $S \Sl {!}NP$). Such modalities are called {\em subexponential.}
Moreover, we consider polymodal systems with a family of subexponentials (even when two subexponentials obey the same rules, they are not
necessarily equivalent). Such extensions of commutative linear logic were considered in~\cite{NigamMiller-2009}, and for the
non-commutative one---in~\cite{KKNS-mscs-2018}. Besides linguistic usage sketched above, linear logic systems with subexponentials (both
commutative and non-commutative) have applications in logical frameworks for formal computation system specification~\cite{Polakow-2000,PfenningSimmons-2009,KKNS-IJCAR-2018}.






\section{Infinitary action logic with subexponentials} \label{sec-calculus}

In this section we define ${!}\mathrm{ACT}_\omega$, that is, infinitary action logic extended with a family of subexponentials.
This is the main system we are going to study. Throughout this paper, we use linear logic notation for formulae of
${!}\mathrm{ACT}_\omega$, in order to avoid notation clashes with classical logic, which is going to be used
as meta-logic inside our proofs.

\smallskip
We start by fixing a \emph{subexponential signature} of the form
$$
\Sigma\ =\
{\langle \mathcal{I}, \preccurlyeq, \mathcal{W}, \mathcal{C}, \mathcal{E} \rangle}
$$
where:
\begin{itemize}

\item $\mathcal{I}$ is a finite set, whose elements are called \emph{subexponential labels};

\item $\preccurlyeq$ is a preorder on $\mathcal{I}$;

\item $\mathcal{W}$, $\mathcal{C}$ and $\mathcal{E}$ are subsets of $\mathcal{I}$, each of which is closed upward w.r.t.\
$\preccurlyeq$.\footnote{Thus if $s_1 \in \mathcal{W}$ and $s_1 \preccurlyeq s_2$, then $s_2 \in \mathcal{W}$; similarly for $\mathcal{C}$ and $\mathcal{E}$.}

\end{itemize}
Intuitively, subexponentials indexed by elements from $\mathcal{W}$, $\mathcal{C}$ and $\mathcal{E}$ allow weakening, contraction and exchange
(permutation) respectively. Since contraction, in its non-local form (see below), and weakening derive exchange, we explicitly postulate $\mathcal{W} \cap \mathcal{C} \subseteq
\mathcal{E}$.

Formulae are built from \emph{propositional variables} $p_1, p_2, p_3, \ldots$ and constants $\U$ and $\Z$ (\emph{multiplicative unit} and \emph{zero} respectively) using the following
connectives:
\begin{itemize}

\item \emph{left} and \emph{right implications}, $\BS$ and $\SL$, also called \emph{right} and \emph{left divisions};

\item \emph{product}, $\mconj$, also called \emph{multiplicative conjunction};

\item \emph{additive conjunction}, $\aconj$, and \emph{disjunction}, $\adisj$;

\item \emph{iteration}, $\KStar$, also called \emph{Kleene star};

\item \emph{subexponentials}, denoted by ${!}^s$ for each $s \in \mathcal{I}$.

\end{itemize}
Here iteration and subexponentials are unary, while the other connectives are binary.

Sequents are expressions of the form $\Pi \yields A$, where $A$ is a formula and $\Pi$ is a sequence of formulae (possibly empty). In what follows, by $A^n$ we shall denote the sequence $A, \ldots, A$ ($n$~times); $A^0$ is the empty sequence.

\medskip
The axioms and rules of our calculus are as follows.

$$
\infer[(\ID)]{A \yields A}{}
$$

$$
\infer[(\BS\yields)]{\Gamma, \Pi, A \BS B, \Delta \yields C}
{\Pi \yields A & \Gamma, B, \Delta \yields C}
\qquad
\infer[(\yields\BS)]{\Pi  \yields A \BS B}
{A, \Pi \yields B}
$$

$$
\infer[(\SL\yields)]{\Gamma, B \SL A, \Pi, \Delta \yields C}
{\Pi \yields A & \Gamma, B, \Delta \yields C}
\qquad
\infer[(\yields\SL)]{\Pi \yields B \SL A}
{\Pi, A \yields B}
$$

$$
\infer[(\mconj\yields)]{\Gamma, A \mconj B, \Delta \yields C}
{\Gamma, A, B, \Delta \yields C}
\qquad
\infer[(\yields\mconj)]{\Gamma, \Delta \yields A \mconj B}
{\Gamma \yields A & \Delta \yields B}
$$

$$
\infer[(\U\yields)]{\Gamma, \U, \Delta \yields C}{\Gamma, \Delta \yields C}
\qquad
\infer[(\yields\U)]{\yields\U}{}
\qquad
\infer[(\Z\yields)]{\Gamma,\Z,\Delta \yields C}{}
$$

$$
\infer[(\adisj\yields)]{\Gamma, A_1 \adisj A_2, \Delta \yields C}
{\Gamma, A_1, \Delta \yields C & \Gamma, A_2, \Delta \yields C}
\qquad
\infer[(\yields\adisj)_i\mbox{, $i = 1,2$}]
{\Pi \yields A_1 \adisj A_2}{\Pi \yields A_i}
$$

$$
\infer[(\aconj\yields)_i\mbox{, $i = 1,2$}]
{\Gamma, A_1 \aconj A_2, \Delta \yields C}
{\Gamma, A_i, \Delta \yields C}
\qquad
\infer[(\yields\aconj)]{\Pi \yields A_1 \aconj A_2}
{\Pi \yields A_1 & \Pi \yields A_2}
$$

$$
\infer[(\KStar\yields)_\omega]
{\Gamma, A^*, \Delta \yields C}
{(\Gamma, A^n, \Delta \yields C)_{n \in \NN}}
$$

$$
\infer[(\yields\KStar)_0]
{\yields A^*}{}
\qquad
\infer[(\yields\KStar)_n\mbox{, $n >0$ and each $\Pi_i$ is non-empty}]
{\Pi_1, \dots, \Pi_n \yields A^*}
{\Pi_1 \yields A & \dots & \Pi_n \yields A}
$$

$$
\infer[({!}\yields)]{\Gamma, {!}^s A, \Delta \yields C}
{\Gamma, A, \Delta \yields C}
\qquad
\infer[(\yields{!})\mbox{, every $s_i \succcurlyeq s$}]
{{!}^{s_1} A_1, \dots, {!}^{s_n} A_n \yields {!}^s B}
{{!}^{s_1} A_1, \dots, {!}^{s_n} A_n \yields B}
$$

$$
\infer[(\WEAK)\mbox{, $w \in \Wc$}]
{\Gamma, {!}^w A, \Delta \yields C}
{\Gamma, \Delta \yields C}
$$

$$
\infer[(\PERM)_1\mbox{, $e \in \Ec$}]
{\Gamma, {!}^e A, \Pi, \Delta \yields C}
{\Gamma, \Pi, {!}^e A, \Delta \yields C}
\qquad
\infer[(\PERM)_2\mbox{, $e \in \Ec$}]
{\Gamma, \Pi, {!}^e A, \Delta \yields C}
{\Gamma, {!}^e A, \Pi, \Delta \yields C}
$$

$$
\infer[(\NCONTR)_1\mbox{, $c \in \Cc$}]
{\Gamma, {!}^c A, \Pi, \Delta \yields C}
{\Gamma, {!}^c A, \Pi, {!}^c A, \Delta \yields C}
\qquad
\infer[(\NCONTR)_2\mbox{, $c \in \Cc$}]
{\Gamma, \Pi, {!}^c A, \Delta \yields C}
{\Gamma, {!}^c A, \Pi, {!}^c A, \Delta \yields C}
$$

$$
\infer[(\CUT)]
{\Gamma, \Pi, \Delta \yields C}
{\Pi \yields A & \Gamma, A, \Delta \yields C}
$$

\bigskip
\noindent
In the presence of an $\omega$-rule~--- namely $(\KStar\yields)_\omega$~--- derivability should be described with a certain amount of care.
There are different but equivalent ways of defining the notion of an (infinitary) derivation (cf.\ \cite[Definition~1.4.4]{Aczel-1977} and
\cite[\S~1]{Buchholz-1977}). For our purposes, we shall employ the following approach.

\begin{Def}
A {\em derivation} in $\eACTomega$ is a well-founded (though not necessarily finite branching) tree $\mathfrak{T}$ such that:
\begin{enumerate}

\item[a.] all the vertices of $\mathfrak{T}$ are labeled by sequents;

\item[b.] for each vertex of $\mathfrak{T}$, its label can be obtained from the labels of its children by an application~of
some rule in $\eACTomega$.\footnote{Here
axioms are treated as nullary rules; so the leafs of $\mathfrak{T}$ must be labeled by axioms.}

\end{enumerate}
In this situation, by the {\em goal} of $\mathfrak{T}$ we mean the label of the root of $\mathfrak{T}$. Naturally, a sequent $s$ is {\em derivable} in $\eACTomega$ iff
$s$ is the goal of some derivation in $\eACTomega$.
\end{Def}

An equivalent characterization of derivability in $\eACTomega$ is given by:

\begin{Prop}
The set of all derivable sequents in $\eACTomega$ coincides with the least set (with respect to inclusion) closed under the rules of $\eACTomega$.
\end{Prop}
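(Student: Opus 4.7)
The proposition asserts the equivalence of two standard descriptions of derivability in an infinitary calculus, so the plan is to prove the two inclusions separately.

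For the direction from tree-derivability to membership in the least closed set, let $\Scut$ denote the least set of sequents closed under the rules of $\eACTomega$, and let $\mathfrak{T}$ be any derivation with goal $s$. I would prove by well-founded induction on the vertices of $\mathfrak{T}$ (which is legitimate precisely because $\mathfrak{T}$ is well-founded) that the label of every vertex lies in $\Scut$. At a leaf, the label is obtained by a nullary rule (an axiom), hence belongs to $\Scut$ by closure. At an internal vertex, the inductive hypothesis places the labels of all children in $\Scut$, and then closure of $\Scut$ under the rule that justifies the vertex yields that the label of the vertex itself is in $\Scut$. Applying this at the root gives $s \in \Scut$.

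For the converse direction, I would show that the set $\Dcut$ of all sequents which are goals of some derivation in $\eACTomega$ is itself closed under the rules; by minimality this gives $\Scut \subseteq \Dcut$. The closure check is routine for finitary rules: given derivations $\mathfrak{T}_1, \dots, \mathfrak{T}_k$ of the premises, form a new tree by adjoining a fresh root labeled by the conclusion and attaching the $\mathfrak{T}_i$ as the immediate subtrees; the result is well-founded because prepending a single root preserves well-foundedness, and the root step is a legitimate rule application. The same construction works for the single infinitary rule $(\KStar\yields)_\omega$, with countably many subtrees $\mathfrak{T}_n$, $n \in \NN$, hanging from the new root.

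The only genuinely non-trivial point, and what I regard as the main obstacle, is verifying well-foundedness of the combined tree in this $\omega$-branching case. I would argue it directly: any infinite descending chain of vertices in the new tree must, after its first step, lie entirely inside one of the $\mathfrak{T}_n$, since the children of the root are the roots of the $\mathfrak{T}_n$ and distinct subtrees share no vertices. This chain then restricts to an infinite descending chain in that $\mathfrak{T}_n$, contradicting its well-foundedness. Hence the combined tree has no infinite descending chain and is a bona fide derivation, completing the closure of $\Dcut$ under $(\KStar\yields)_\omega$ and thus the proof.
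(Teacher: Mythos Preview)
Your proposal is correct and follows essentially the same approach as the paper. Both arguments rest on the same two ingredients: (i) the set of derivable sequents is closed under the rules, and (ii) every derivable sequent lies in any closed set, established via well-foundedness of derivation trees. The paper phrases (ii) as a minimal-counterexample argument rather than explicit well-founded induction, which is purely stylistic; and it dismisses (i) with a single ``Clearly,'' whereas you spell out the tree-grafting construction and the well-foundedness check for the $\omega$-branching case --- a detail the paper omits but which you handle correctly.
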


\begin{proof}
Let $\mathcal{S}$ be the set of all sequents derivable in $\eACTomega$. Denote by $\mathscr{U}$ the collection of all sets of sequents that are
closed under the rules of $\eACTomega$. We wish to show that $\mathcal{S}$ in the least element of $\mathscr{U}$. Clearly, $\mathcal{S}$ belongs
to $\mathscr{U}$. Now consider an arbitrary $\mathcal{P} \in \mathscr{U}$. Given a derivation $\mathfrak{T}$ in $\eACTomega$, take
\[
V_{\mathfrak{T}}\ :=\
\text{the set of all vertices of $\mathfrak{T}$ whose labels are not in $\mathcal{P}$} .
\]
Suppose $V_{\mathfrak{T}} \ne \varnothing$. Since $\mathfrak{T}$ is well-founded, $V_{\mathfrak{T}}$ has a minimal element $v_0$ (with respect to the partial order induced by $\mathfrak{T}$). But then:
\begin{itemize}

\item
the children of $v_0$ are not in $V_{\mathfrak{T}}$, so their labels must be in $\mathcal{P}$;

\item
the label of $v_0$ can be obtained from the labels of its children by an application of some rule of $\eACTomega$.

\end{itemize}
This contradicts the fact that $\mathcal{P}$ is closed under the rules of $\eACTomega$. Therefore $V_{\mathfrak{T}} = \varnothing$, and in particular, the goal of
$\mathfrak{T}$ must be in $\mathcal{P}$. Consequently $\mathcal{S} \subseteq \mathcal{P}$.
\end{proof}

The presence of an $\omega$-rule also  indicates that our calculus has certain model-theoretic features. In fact,
this rule is indispensable, which implies that the set of all sequents derivable in our calculus is not computably enumerable.

Notice that the right rule for $\KStar$ is formulated in a non-standard way, by isolating the zero case and imposing a non-emptiness condition on the non-zero one. Such a formulation is equivalent to Palka's one:
$$
\infer[(\yields \KStar)_n,\ n \geqslant 0]
{\Pi_1, \ldots, \Pi_n \yields A^*}
{\Pi_1 \yields A^* & \ldots & \Pi_n \yields A^*}
$$
without any restrictions on $\Pi_i$. Indeed, if the antecedent is non-empty, then so is a least one $\Pi_i$. Thus, $n > 0$. The empty $\Pi_i$'s can be just removed, which makes the rule even stronger, getting rid of useless premises. The case of empty antecedent is captured by our $(\yields\KStar)_0$ axiom. The reason for this change is as follows: in the new formulation, for each sequent there is now only a finite choice of rule applications which can derive it. (In the original formulation, the choice was infinite, since one could add meaningless empty $\Pi_i$'s.)

Also notice that the contraction rule here is presented in its {\em non-local} form, allowing contraction of distant instances of ${!}^c A$. In absence of permutation (that is, $c \in \Cc$, but $c \notin \Ec$) this is crucial for cut elimination~\cite{KKNS-mscs-2018}.




\vskip 5pt
The rest of the article is organized as follows. In Section~\ref{sec-prel} we introduce some computational machinery we are going to use further.
In Section~\ref{sec-cutelim} we prove cut elimination in ${!}\mathrm{ACT}_\omega$. In Section~\ref{sec-comp} we study the complexity aspects of
$!\mathrm{ACT}_\omega$, and show the following:
\begin{enumerate}
 \item in the case of $\Cc \ne \varnothing$, the derivability problem in $!\mathrm{ACT}_\omega$ is  $\Pi_1^1$-complete and the closure ordinal of the corresponding derivability operator is $\omega_1^{\mathrm{CK}}$;
 \item in the case of $\Cc = \varnothing$ (that is, no subexponential allows contraction) the closure ordinal is bounded by $\omega^\omega$ and the derivability problem is $\Pi_1^0$-complete. (In other words, with $\Cc = \varnothing$ complexity is the same as for $\mathrm{ACT}_\omega$ without subexponentials.)
\end{enumerate}
Section~\ref{sec-conclusion} concludes the article stating some problems left for future research.

Notice that in the results for $\Cc \ne \varnothing$ cut elimination is used not for the upper bound (proof search), since  such high complexity upper bounds ($\Pi_1^1$ and $\omega_1^{\mathrm{CK}}$) can be as well obtained with cut in the system. Cut elimination is needed for the reduction used to prove the lower bound.

For proving the $\Pi_1^0$ upper bound under the $\Cc = \varnothing$ condition, we use a new, more robust approach, which is less dependent on the concrete structure of the calculus, if compared to the ones by Palka~\cite{Palka-2007} and Das and Pous~\cite{DasPous-2018}.


\section{Some computational machinery} \label{sec-prel}

In this section we introduce some machinery needed for our complexity estimations in Section~\ref{sec-comp}. In particular, we consider {\em second-order}
arithmetic, built on top of second-order predicate logic. The logical symbols used here are kept different from the ones used in ${!}\mathrm{ACT}_\omega$: $\wedge$, $\vee$, $\to$, compared to $\aconj$, $\adisj$, $\BS$. This excludes confusion between the theory studied (${!}\mathrm{ACT}_\omega$) and the
meta-theory employed (second-order arithmetic).

\subsection{$\mathbf{\Pi}^1_1$-sets and $\mathbf{\Delta}^1_1$-sets} \label{subsec-msoa}

Let $\sigma$ be one's favourite signature of Peano arithmetic (say, $\left\{ 0, \mathsf{s}, +, \,\cdot\,, =, \leqslant \right\}$), and let $\mathfrak{N}$ be its \emph{standard model}.\footnote{As
far as degrees of undecidability are concerned, it makes no difference which signature we choose. For our purposes, it will be convenient to think of $\sigma$
as containing symbols for all (total) computable functions as well as all computable relations.}
Throughout the paper we assume the following:
\begin{itemize}

\item the \emph{connective} symbols are $\neg$, $\wedge$ and $\vee$;

\item the \emph{quantifier} symbols are $\forall$ and $\exists$.

\end{itemize}
For our present purposes, it is convenient to treat $\rightarrow$ as defined, rather than as primitive. Next, we restrict our attention to \emph{monadic}
second-order arithmetic~--- bearing in mind, however, that first-order arithmetic allows us to code elements of $\mathbb{N}^{n}$ as elements of
$\mathbb{N}$. Recall, its language $\mathcal{L}_2$ includes two different sorts of variables, namely:
\begin{itemize}

\item \emph{individual variables} $x$, $y$, \dots\ (intended to range over natural numbers);

\item \emph{set variables} $X$, $Y$, \dots\ (intended to range over sets of natural numbers).

\end{itemize}
Accordingly one must distinguish between \emph{individual} and \emph{set quantifiers}, viz.\
\[
{\forall x},\ {\exists x},\ {\forall y},\ {\exists y},\ \dots
\quad \text{and} \quad
{\forall X},\ {\exists X},\ {\forall Y},\ {\exists Y},\ \dots
\]
The \emph{$\mathcal{L}_2$-formulas}~--- or \emph{monadic second-order $\sigma$-formulas}~--- are built up from the first-order~$\sigma$-for\-mu\-las
and the expressions of the form $t \in X$, where $t$ is a $\sigma$-term and $X$ is a set variable, by means of the connective symbols and the
quantifiers in the usual way. As one would expect, we write $\Phi \rightarrow \Psi$ as shorthand for ${\neg \Phi} \vee \Psi$, and 
$\Phi \leftrightarrow \Psi$ for $({\neg \Phi} \vee \Psi) \wedge ({\neg \Psi} \vee \Phi)$. Let
\[
{\mathrm{Th}_2 \left( \mathfrak{N} \right)}\ :=\ \text{the collection of all} ~ \mathcal{L}_2 \text{-sentences true in} ~ \mathfrak{N} .
\]
So $\mathrm{Th}_2 \left( \mathfrak{N} \right)$ denotes the $\mathcal{L}_2$-theory of $\mathfrak{N}$, often called \emph{complete second-order
arithmetic}.

We say an $\mathcal{L}_2$-formula is a \emph{$\Pi^1_{1}$-formula} iff it has the form ${\forall X}\, \Psi$ with $X$ a set variable and $\Psi$
containing no set quantifiers. The \emph{$\Sigma^1_{1}$-formulas} are defined in the same way but with $\exists$ in place of $\forall$. Let
\begin{align*}
{\Pi^1_1 \text{-} \mathrm{Th}_2 \left( \mathfrak{N} \right)}\ &:=\ \text{the collection of all} ~ \Pi^1_1 \text{-sentences true in} ~ \mathfrak{N} ,\\
{\Sigma^1_1 \text{-} \mathrm{Th}_2 \left( \mathfrak{N} \right)}\ &:=\ \text{the collection of all} ~ \Sigma^1_1 \text{-sentences true in} ~ \mathfrak{N} .
\end{align*}
So $\Pi^1_1 \text{-} \mathrm{Th}_2 \left( \mathfrak{N} \right)$ and $\Sigma^1_1 \text{-} \mathrm{Th}_2 \left( \mathfrak{N} \right)$ denote respectively
the $\Pi^1_1$- and $\Sigma^1_1$-fragments of $\mathrm{Th}_2 \left( \mathfrak{N} \right)$, often called its \emph{universal} and \emph{existential
fragments}.

Let $P, Q \subseteq \mathbb{N}$. We say $P$ is \emph{$m$-reducible} to $Q$, written $P \leqslant_m Q$, iff there exists a total~com\-p\-u\-tab\-le
function $f$ from $\mathbb{N}$ to $\mathbb{N}$ such that $f^{-1} \left[ Q \right] = P$, i.e.\ for every $k \in \mathbb{N}$,
\[
{k\ \in\ P}
\quad \Longleftrightarrow \quad
{{f \left( k \right)}\ \in\ Q} .
\]
Then $P$ and $Q$ are called \emph{$m$-equivalent}, written $P \equiv_m Q$, iff they are $m$-reducible to each other. Also, we call $P$ and $Q$
\emph{computably isomorphic}, written $P \simeq Q$, iff there is a one-one total~com\-p\-u\-tab\-le function $f$ from $\mathbb{N}$ onto $\mathbb{N}$
such that $f \left[ P \right] = Q$.\footnote{In
effect, in many important special cases $P \equiv_m Q$ implies $P \simeq Q$.}
Now $P$ is called:
\begin{itemize}

\item \emph{$\Pi^1_1$-bounded} iff there exists a $\Pi^1_1$-formula $\Phi \left( x \right)$ defining $P$ in $\mathfrak{N}$;

\item \emph{$\Pi^1_1$-hard} iff for any $\Pi^1_1$-bounded $R \subseteq \mathbb{N}$ we have $R \leqslant_m P$;

\item \emph{$\Pi^1_1$-complete} iff it is $\Pi^1_1$-bounded and $\Pi^1_1$-hard.

\end{itemize}
Similarly with $\Sigma^1_1$ in place of $\Pi^1_1$. Evidently for every $P \subseteq \mathbb{N}$ the following hold:
\begin{align*}
P ~ \text{is} ~ \Pi^1_1 \text{-bounded} \quad &\Longleftrightarrow \quad
\overline{P} ~ \text{is} ~ \Sigma^1_1 \text{-bounded} ;\\
P ~ \text{is} ~ \Pi^1_1 \text{-hard} \quad &\Longleftrightarrow \quad
\overline{P} ~ \text{is} ~ \Sigma^1_1 \text{-hard} ;\\
P ~ \text{is} ~ \Pi^1_1 \text{-complete} \quad &\Longleftrightarrow \quad
\overline{P} ~ \text{is} ~ \Sigma^1_1 \text{-complete} .
\end{align*}
Thus, without loss of generality, we may concentrate on $\Pi^1_1$. Further~--- to simplify matters, for each formal language which appears in this
article, we shall tacitly fix a G\"{o}del numbering of its objects, and occasionally identify its formulas, etc.\ with their G\"{o}del numbers.

\begin{Folk} \label{folk-msoa-1} \renewcommand{\theenumi}{\alph{enumi}}
Let $P \subseteq \mathbb{N}$. Then:
\begin{enumerate}

\item $P$ is $\Pi^1_1$-bounded iff $P \leqslant_m \Pi^1_1 \text{-} \mathrm{Th}_2 \left( \mathfrak{N} \right)$;

\item $P$ is $\Pi^1_1$-hard iff $\Pi^1_1 \text{-} \mathrm{Th}_2 \left( \mathfrak{N} \right) \leqslant_m P$;

\item $P$ is $\Pi^1_1$-complete iff $P \equiv_m \Pi^1_1 \text{-} \mathrm{Th}_2 \left( \mathfrak{N} \right)$ iff
$P \simeq \Pi^1_1 \text{-} \mathrm{Th}_2 \left( \mathfrak{N} \right)$.

\end{enumerate}
Similarly with $\Sigma^1_1$ in place of $\Pi^1_1$.
\end{Folk}

This is, in fact, intimately connected with:

\begin{Folk} \label{folk-msoa-2}
There exists a $\Pi^1_1$-formula $\Pi^1_1 \text{-} \mathrm{SAT} \left( x, y \right)$ such that for any $\Pi^1_1$-formula $\Phi \left( x \right)$,
\[
{\mathfrak{N} \vDash {\forall x}\, \left( {\Pi^1_1 \text{-} \mathrm{SAT}} \left( x, {\# \Phi} \right) \leftrightarrow \Phi \left( x \right) \right)}
\]
where $\# \Phi$ denotes the G\"{o}del number of $\Phi$. Similarly with $\Sigma^1_1$ in place of $\Pi^1_1$.
\end{Folk}

Finally, call $P$ \emph{hyperarithmetical}, or \emph{$\Delta^1_1$-bounded}, iff it is both $\Pi^1_1$-bounded and $\Sigma^1_1$-boun\-ded. By analogy
with what happened earlier, one can define what it means for $P$ to be \emph{$\Delta^1_1$-hard} and \emph{$\Delta^1_1$-com\-p\-le\-te}, but the
latter notion turns out to be devoid of content:

\begin{Folk} \label{folk-msoa-3}
There exists no $\Delta^1_1$-complete set.
\end{Folk}

Consequently, a hyperarithmetical set cannot be $\Pi^1_1$-hard, and cannot be $\Sigma^1_1$-hard~--- since otherwise it would be $\Delta^1_1$-complete. The reader might
consult \cite{Rogers-1967} for more information.


\subsection{Kleene's $\mathcal{O}$} \label{subsec-o}

To simplify the discussion, let
\begin{align*}
\mathsf{Ord}\ &:=\ \textrm{the class of all ordinals} ,\\
{\mathsf{L} \text{-} \mathsf{Ord}}\ &:=\ \textrm{the class of all limit ordinals} ,\\
{\mathsf{C} \text{-} \mathsf{Ord}}\ &:=\ \textrm{the class of all constructive ordinals} .
\end{align*}
The least element of $\mathsf{Ord} \setminus {\mathsf{C} \text{-} \mathsf{Ord}}$ is traditionally called the \emph{Church--Kleene ordinal}, and denoted
by $\omega_1^\mathrm{CK}$. \emph{Kleene's system of notation for $\mathsf{C} \text{-} \mathsf{Ord}$} consists of:
\begin{itemize}

\item a special partial function $\nu_\mathcal{O}$ from $\mathbb{N}$ onto $\mathsf{C} \text{-} \mathsf{Ord}$;

\item a special ordering relation $<_\mathcal{O}$ on $\mathrm{dom} \left( \nu_\mathcal{O} \right)$ which mimics $<$ on $\mathsf{C} \text{-}
\mathsf{Ord}$.\footnote{As
usual, if $f$ is a partial function, we write $\mathrm{dom} \left( f \right)$ for its domain.}

\end{itemize}
We say $n \in \mathbb{N}$ is a \emph{notation for $\alpha \in \mathsf{C} \text{-} \mathsf{Ord}$} iff $\nu_\mathcal{O} \left( n \right) = \alpha$. Using
one's favourite universal partial computable (two-place) function $\ae$, $\nu_\mathcal{O}$ and $<_\mathcal{O}$ are defined simultaneously by induction:
\begin{itemize}

\item The ordinal $0$ receives the only notation, namely $1$. Thus $\nu_\mathcal{O}^{-1} \left( 0 \right) = \left\{ 1 \right\}$.

\item Suppose all ordinals below $\alpha$ have received their notations, and assume that $<_\mathcal{O}$ has been defined on these notations.

\begin{itemize}

\item If $\alpha = \beta + 1$, then $\alpha$ receives the notations $\left\{ 2^k \mid k \in \nu_\mathcal{O}^{-1} \left( \beta \right) \right\}$.
Furthermore, for each $k \in \nu_\mathcal{O}^{-1} \left( \beta \right)$ we set $i <_\mathcal{O} 2^k$ iff $i = k$ or $i <_\mathcal{O} k$.

\item If $\alpha \in \mathsf{L} \text{-} \mathsf{Ord}$, then $\alpha$ receives the notation $3 \times 5^k$ for any $k$ such that
\[
{\ae_k \left( 0 \right)}\ <_\mathcal{O}\ {\ae_k \left( 1 \right)}\ <_\mathcal{O}\ {\ae_k \left( 2 \right)}\ <_\mathcal{O}\ \dots
\quad \text{and} \quad
{\bigcup\nolimits_{i \in \mathbb{N}} \nu_\mathcal{O} \left( \ae_k \left( i \right) \right)}\ = \alpha
\]
(hence $\ae_k$ must be total, and all $\ae_k \left( i \right)$ must be elements of $\bigcup\nolimits_{\beta < \alpha} \nu_\mathcal{O}^{-1} \left( \beta
\right)$). Furthermore, for each such $k$ we set $i <_\mathcal{O} 3 \times 5^k$ iff $i <_\mathcal{O} \ae_k \left( j \right)$ for some $j$.

\end{itemize}

\end{itemize} 
In what follows we shall often write $n \in \mathcal{O}$ instead of $n \in \mathrm{dom} \left( \nu_\mathcal{O} \right)$. It turns out that $\mathrm{dom}
\left( \nu_\mathcal{O} \right)$ has the same complexity as the universal fragment of complete second-order arithmetic:

\begin{Folk}\label{folk-o-1} 
$\mathrm{dom} \left( \nu_\mathcal{O} \right)$ is $\Pi^1_1$-complete.
\end{Folk}

Moreover, the restriction of $<_\mathcal{O}$ to $\left\{ k \mid k <_\mathcal{O} n \right\}$ is computably enumerable~uniformly in $n$:

\begin{Folk} \label{folk-o-2} 
There exists a computable $f: \mathbb{N} \rightarrow \mathbb{N}$ such that for all $n \in \mathcal{O}$,
\[
{\mathrm{dom} \left( \ae_{f \left( n \right)} \right)}\ =\ {\left\{ k \mid k <_\mathcal{O} n \right\}} .
\] \vspace{-5mm}
\end{Folk}

Readers who want to know more about constructive ordinals and systems of notation might con\-sult~\cite{Rogers-1967} or \cite{Sacks-1990}.


\subsection{Inductive definitions} \label{subsec-id}

A function $F$ from $\mathcal{P} \left( \mathbb{N} \right)$ to $\mathcal{P} \left( \mathbb{N} \right)$ is said to be \emph{monotone} iff for all
$P, Q \subseteq \mathbb{N}$,
\[
P\ \subseteq\ Q \quad \Longrightarrow \quad
{F \left( P \right)}\ \subseteq\ {F \left( Q \right)} .
\]
Given such an $F$, for each $S \subseteq \mathbb{N}$ we inductively define
\[
{F^{\alpha} \left( S \right)}\ :=\
\begin{cases}
S                                                       &\text{if} ~\, \alpha = 0 ,\\
{F \left( F^{\beta} \left( S \right) \right)}           &\text{if} ~\, \alpha = \beta +1 ,\\
{\bigcup_{\beta < \alpha} {F^{\beta} \left( S \right)}} &\text{if} ~\, \alpha \in {\mathsf{L} \text{-} \mathsf{Ord}} \setminus \left\{ 0 \right\} .
\end{cases}
\]
Evidently the resulting transfinite sequence is non-decreasing, viz.\ for any $\alpha, \beta \in \mathsf{Ord}$,
\[
\alpha\ <\ \beta \quad \Longrightarrow \quad
{F^{\alpha} \left( S \right)}\ \subseteq\ {F^{\beta} \left( S \right)} .
\]
Furthermore, it stabilises, by a version of the well-known Knaster--Tarski theorem:

\begin{Folk} \label{folk-id-1}
Let $F: \mathcal{P} \left( \mathbb{N} \right) \rightarrow \mathcal{P} \left( \mathbb{N} \right)$ be monotone. Then for every $S \subseteq \mathbb{N}$
there exists a least $\alpha \in \mathsf{Ord}$ such that $F^{\alpha+1} \left( S \right) = F^{\alpha} \left( S \right)$~--- so $F^{\alpha} \left( S \right)$ is
the least fixed point of $F$ containing $S$.
\end{Folk}

For every monotone $F: \mathcal{P} \left( \mathbb{N} \right) \rightarrow \mathcal{P} \left( \mathbb{N} \right)$ the least $\alpha \in \mathsf{Ord}$
which satisfies $F^{\alpha+1} \left( \varnothing \right) = F^{\alpha} \left( \varnothing \right)$ is called the \emph{closure ordinal} of $F$. Next,
for each $\mathcal{L}_2$-formula $\Phi \left( x, X \right)$ we define the function $\left[ \Phi \right]$ from $\mathcal{P} \left( \mathbb{N} \right)$
to $\mathcal{P} \left( \mathbb{N} \right)$ as follows:
\[
{\left[ \Phi \right] \left( P \right)}\ :=\ {\left\{ n \in \mathbb{N} \mid \mathfrak{N} \models \Phi \left( n, P \right) \right\}} .
\]
We say $F: \mathcal{P} \left( \mathbb{N} \right) \rightarrow \mathcal{P} \left( \mathbb{N} \right)$ is a \emph{$\Pi^1_1$-operator} iff $F = \left[ \Phi
\right]$ for some $\Pi^1_1$-formula $\Phi$.\footnote{Similarly
for $\Sigma^1_1$. However, the properties of $\Sigma^1_1$-operators are quite different from those of $\Pi^1_1$-operators; see \cite{Moschovakis-1974}
and \cite{Hinman-1978} for more information.}

\begin{Folk} \label{folk-id-2} \renewcommand{\theenumi}{\alph{enumi}}
Let $F: \mathcal{P} \left( \mathbb{N} \right) \rightarrow \mathcal{P} \left( \mathbb{N} \right)$ be a monotone $\Pi^1_1$-operator. Then:
\begin{enumerate}

\item the least fixed point of $F$ is $\Pi^1_1$-bounded;

\item the closure ordinal of $F$ is less than or equal to $\omega_1^{\mathrm{CK}}$.

\end{enumerate}
\end{Folk}

\smallskip
We call $F$ a \emph{hyperarithmetical operator}, or a \emph{$\Delta^1_1$-operator}, iff there are a $\Pi^1_1$-formula $\Phi$ and a $\Sigma^1_1$-formula
$\Psi$ such that
\[
{\left[ \Phi \right]} \ =\ {F}\ =\ {\left[ \Psi \right]} .
\]
Further~--- $F$ is said to be an \emph{arithmetical} (or \emph{elementary}) \emph{operator} iff $F = \left[ \Phi \right]$ for some
$\mathcal{L}_2$-for\-mu\-la~$\Phi$~with no set quantifiers;
also, $\mathcal{L}_2$-formulas with no set quantifiers are traditionally called \emph{arithmetical} (or \emph{elementary}).
For discussion, examples, and related results, the reader might consult \cite{Moschovakis-1974} and \cite{Hinman-1978}.


We shall use a specific operator $F$, namely, the {\em immediate derivability operator of ${!}\mathrm{ACT}_\omega$,} denoted by $\Dcut$. Let $\mathrm{Seq}$ be the set of all sequents in the language of $!\mathrm{ACT}_\omega$ and let $\Dcut$ be the function from $\mathcal{P}(\mathrm{Seq})$ to $\mathcal{P}(\mathrm{Seq})$ such that for any $S \subseteq \mathrm{Seq}$ and $s \in \mathrm{Seq}$,
\[
s\ \in\ {\Dcut \left( S \right)}
\quad \Longleftrightarrow \quad
\begin{array}{c}
s ~ \text{is an element of} ~ S ~ \text{or} ~ s ~ \text{can be obtained from}\vspace{0.5mm}\\
\text{elements of} ~ S ~ \text{by one application of some rule of} ~ {! \mathrm{ACT}_{\omega}} .
\end{array}
\]
Here axioms of $!\mathrm{ACT}_\omega$ are considered as rules with zero premises; so axioms belong to
$\Dcut(S)$ for each $S$. Finally, remember from section~\ref{sec-calculus} that the collection of all sequents derivable in $!\mathrm{ACT}_\omega$ coincides with the smallest set of sequents closed under the rules of $!\mathrm{ACT}_\omega$; thus this collection must be the least fixed point of $\Dcut$.



\section{Cut elimination in ${!}\mathbf{ACT}_\omega$}
\label{sec-cutelim}

In this section we prove that any sequent provable in $!\mathrm{ACT}_\omega$ can be proved without using the cut rule.
This proof is a juxtaposition of Palka's cut elimination proof for infinitary action logic~\cite{Palka-2007} and the cut-elimination proof for non-commutative linear logic with subexponentials~\cite{KKNS-mscs-2018}.

First we show how to eliminate one cut. Let $\Dnocut$ denote the {\em immediate derivability operator of $!\mathrm{ACT}_\omega$
without the cut rule} (in $\Dcut$, cut is allowed). Purely for exposition, if $\alpha$ is an ordinal, we shall often write $\hat{\mathcal{S}}_\alpha$
instead of $\Dnocut^\alpha (\varnothing)$.\footnote{Here
$\Dnocut$ plays the role of $F$ from the previous section.}


Cut elimination will be proved by transfinite induction. Let us define the parameters used in this inductive argument. The {\em complexity} of a formula is defined in a traditional way, as the total number of connective occurrences. For a sequent $s$ derivable without cut, let its {\em rank} be the smallest $\alpha$ such that $s \in \hat{\mathcal{S}}_\alpha$. These ranks are always successor ordinals:
\begin{quote}
Let $\alpha$ be the rank of $s$. Assume that $\alpha \in \mathsf{L} \text{-} \mathsf{Ord}$. Then $\alpha \ne 0$ (because $\hat{\mathcal{S}}_0 = \varnothing$) and $\hat{\mathcal{S}}_\alpha = \bigcup\nolimits_{\beta < \alpha} \hat{\mathcal{S}}_\beta$, hence $s \in \hat{\mathcal{S}}_\beta$ for some $\beta < \alpha$ --- which contradicts the choice of $\alpha$. Thus $\alpha$ is not limit, so $\alpha$ has the form $\beta + 1$.
\end{quote}
An important observation is that a sequent belongs to $\hat{\mathcal{S}}_{\alpha+1}$ iff it can be obtained from elements of $\hat{\mathcal{S}}_\alpha$ by one application of some rule. (In particular, sequents from $\hat{\mathcal{S}}_1$ are simply axioms.)

As an example, consider the application
\[
\infer
{{\mathbf{1}}^{\ast} \yields \mathbf{1}}
{{\left( {\mathbf{1}}^n \yields \mathbf{1} \right)}_{n \in \NN}}
\]
of the $\omega$-rule. Notice that for each $n \in \mathbb{N}$ the sequent ${\mathbf{1}}^n \yields \mathbf{1}$ has rank $n$. Therefore $\hat{\mathcal{S}}_{\alpha}$ includes $\left\{ {\mathbf{1}}^n \yields \mathbf{1} \mid n \in \mathbb{N} \right\}$ iff $\alpha \geqslant \omega$. Thus the rank of ${\mathbf{1}}^{\ast} \yields \mathbf{1}$ must be $\omega + 1$.

\begin{Th}\label{Th:onecut}
 If $\Pi \yields A$ and $\Gamma, A, \Delta \yields B$ are derivable without using cut, then so is $\Gamma,\Pi,\Delta \yields B$.
\end{Th}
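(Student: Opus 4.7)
The plan is to proceed by transfinite induction with a multi-level measure. Because the $(\NCONTR)$ rule can duplicate the cut formula, the naive one-cut statement does not close under a plain induction on complexity and ranks: a single reduction step on $A = {!}^c A'$ with $c \in \mathcal{C}$ produces two cuts on the same formula, and the intermediate cut-free derivation has no \emph{a priori} bound on its rank. To absorb this I would strengthen the inductive claim to a \emph{multicut} version: given a cut-free derivation of $\Pi \yields A$ of rank $\alpha$ and a cut-free derivation of rank $\beta$ of a sequent with $m$ marked occurrences of $A$ in its antecedent, simultaneously replace all marked occurrences of $A$ by $\Pi$ to obtain a cut-free derivation. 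The theorem is the single-occurrence instance; the induction is on $(|A|, \alpha \# \beta, m)$ ordered lexicographically, where $\#$ is the Hessenberg (natural) sum.

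The first block of the case analysis handles \emph{commutations}, where the last rule of the right derivation does not act on any marked occurrence of $A$ (and symmetrically on the left). Here the cut is pushed above that rule, each of the resulting premises has strictly smaller $\alpha \# \beta$ with $|A|$ and $m$ unchanged, the inductive hypothesis applies, and the rule is reapplied to the resulting cut-free sequents. The $\omega$-rule $(\KStar\yields)_\omega$ is treated exactly the same way, yielding countably many subsidiary cuts, each with strictly smaller right rank, after which $(\KStar\yields)_\omega$ is reapplied. The structural rules $(\WEAK)$, $(\PERM)$, $(\NCONTR)$ acting on non-marked formulas commute straightforwardly, as do all the remaining left and right rules.

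The \emph{principal cases} treat the situation where the last rule of the right derivation acts on a marked occurrence of $A$. For the binary multiplicative and additive connectives, and for $A = {!}^s A'$ when the right rule $({!}\yields)$ acts on the marked occurrence (paired, by the left-commutation cases, with $(\yields{!})$ on the left), the reduction produces cuts on immediate subformulas of $A$, to which the outer induction on $|A|$ applies. For $A = (A')^\ast$ we likewise reduce the left derivation to end with $(\yields\KStar)_n$ by left-commutation; then we select the $n$-th premise $\Gamma, (A')^n, \Delta \yields B$ of the $\omega$-rule on the right and perform $n$ iterated cuts on $A'$ of strictly smaller complexity. The structural subcases on a marked $A = {!}^c A'$ are the most delicate: $(\WEAK)$ drops the marked occurrence from the substitution, decreasing $m$; $(\PERM)$ leaves $m$ unchanged but decreases the right rank; $(\NCONTR)$ increases $m$ by one but strictly decreases the right rank, so the lexicographic measure still decreases. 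After applying the multicut IH to the $(\NCONTR)$-premise one obtains a sequent with two adjacent copies of $\Pi$ in its antecedent; since the left derivation ends with $(\yields{!})$, $\Pi$ has the form ${!}^{s_1} A_1, \dots, {!}^{s_k} A_k$ with every $s_i \succcurlyeq c$, and by the upward closure of $\mathcal{C}$ each $s_i \in \mathcal{C}$, so $k$ applications of non-local contraction merge the two copies into one.

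The main obstacle is ensuring termination across the interaction of $(\NCONTR)$ and $(\yields{!})$: a single cut reduction on ${!}^c A'$ can replicate the cut formula, and without either the multicut strengthening or an analogous mechanism the induction simply does not close. The lexicographic measure $(|A|, \alpha \# \beta, m)$ works because the $m$-increasing $(\NCONTR)$ step is always accompanied by a strict decrease of $\beta$, which dominates $m$ in the order. The Kleene star case, despite its infinitary $\omega$-rule, is by contrast benign: its principal reduction always yields cuts on the strictly simpler $A'$, and commutation through $(\KStar\yields)_\omega$ is handled uniformly premise by premise.
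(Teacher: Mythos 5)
Your proposal is correct and follows essentially the same route as the paper: the ``multicut'' you introduce is precisely the paper's Gentzen-style $(\mathrm{mix})$ rule (a cut absorbing non-local contraction, usable only when the left premise ends with $(\yields{!})$ so that $\Pi$ is a block of ${!}^{s_i}$-formulae contractible, weakenable or permutable as a whole), and your lexicographic measure $(|A|,\ \alpha\#\beta,\ m)$ plays the same role as the paper's nested induction on cut-formula complexity and the ranks of the two premises. The only differences are bookkeeping ones---the paper's mix keeps a single copy of $\Pi$ and deletes the remaining occurrences, whereas your multicut substitutes $\Pi$ for every marked occurrence and contracts copies afterwards---and these do not change the substance of the argument.
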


\begin{proof}

In the presence of contraction, attempts to establish cut elimination by induction fail when the inductive argument comes across applications of contraction. Following the classical strategy of Gentzen~\cite{Gentzen-1935}, we introduce the {\em mix} rule, which is a combination of contraction (in our case, non-local contraction) and cut:
$$
\infer[(\mathrm{mix}),\ c \in \Cc]
{\Delta_0, \Delta_1, \ldots, \Delta_i, \Pi, \Delta_{i+1}, \ldots,
\Delta_n \yields B}
{\Pi \yields {!}^c A & \Delta_0, {!}^c A, \Delta_1, {!}^c A, \ldots,
\Delta_{i}, {!}^c A, \Delta_{i+1}, {!}^c A, \ldots, {!}^c A, \Delta_n \yields B}
$$

Notice that the mix rule is available only for formulae of the form ${!}^c A$, where $c \in C$. Thus, unlike the intuitionistic situation, mix is {\em not} a generalization of cut. We shall perform cut and mix elimination by joint induction. For mix, however, we shall consider only a specific case when its left premise was introduced by the $(\yields {!})$ rule, as this will be sufficient for eliminating cut.\footnote{This simplification of the proof was suggested by one of the referees.}

Namely, we are going to prove the conjunction of the following two claims:
\begin{enumerate}
 \item if $\Pi \yields A$ and $\Gamma, A, \Delta \yields B$ are derivable without using cut and mix, then so is $\Gamma, \Pi, \Delta \yields B$;
 \item if $\Pi \yields {!}^c A$ and $\Delta_0, {!}^c A, \Delta_1, {!}^c A, \ldots,
\Delta_{i}, {!}^c A, \Delta_{i+1}, {!}^c A, \ldots, {!}^c A, \Delta_n \yields B$ are derivable without using cut and mix and, moreover, $\Pi \yields {!}^c A$ is derived from $\Pi \yields A$ by $(\yields {!})$, then $\Delta_0, \Delta_1, \ldots, \Delta_i, \Pi, \Delta_{i+1}, \ldots,
\Delta_n \yields B$ is also derivable without using cut and mix.
\end{enumerate}

We proceed by nested induction on the following parameters:
\begin{enumerate}
 \item complexity of the formula being cut ($A$ for cut, ${!}^c A$ for mix), measured just as the total number of variable, constant, and connective occurrences;
 \item rank of the left premise ($\Pi \yields A$), only for cut;
 \item rank of the right premise.
\end{enumerate}

\subsubsection*{Cut Elimination}

Let the rank of $\Pi \yields A$ be $\alpha + 1$. As noticed above, $\Pi \yields A$ can be obtained by an application of a rule from sequent(s) of rank $\leqslant \alpha$ (in particular, for $\alpha = 0$ the sequent should be an axiom).

Consider the possible cases:

\vskip 2pt
{\em Case 1 (axiom).} $\Pi \yields A$ is the $(\mathrm{id})$ axiom, {\em i.e.,} $A \yields A$. Then the right premise, $\Gamma, A, \Delta \yields B$, coincides with the goal sequent, nothing to prove (cut disappears).

\vskip 2pt
{\em Case 2 (non-principal).} $\Pi \yields A$ is obtained by a rule operating in  the left-hand side of the sequent. Let us denote this rule by $\mathrm{R}$.
If $\mathrm{R}$ is $(\mconj\yields)$, $(\mathbf{1}\yields)$, $(\adisj\yields)$, $(\aconj\yields)$, $(\KStar\yields)_\omega$, or any of the rules operating ${!}^s$, then it has one or several premises of the form $\widetilde{\Pi} \yields A$. These premises have rank $\leqslant \alpha$, and by induction (complexity of $A$ unchanged, rank of the left premise reduced) we get derivability of $\Gamma, \widetilde{\Pi}, \Delta \yields B$ without cut and mix. The rule $\mathrm{R}$ is also applicable in the $\Gamma, \ldots, \Delta$ context. Thus, we get cut-free derivability of $\Gamma, \Pi, \Delta \yields B$.

The case of $(\SL\yields)$ or $(\BS\yields)$ is similar. In this case we proceed by induction with the right premise of this rule and then apply the rule:
$$
\infer[(\CUT)]{\Gamma, \Pi_1, \Pi_2, E \BS F, \Pi_3, \Delta \yields B}{\infer[(\BS\yields)]{\Pi_1, \Pi_2, E \BS F, \Pi_3 \yields A}{\Pi_2 \yields E & \Pi_1, F, \Pi_3 \yields A} &
\Gamma, A, \Delta \yields B}
$$
transforms into
$$
\infer[(\BS\yields)]{\Gamma, \Pi_1, \Pi_2, E \BS F, \Pi_3, \Delta \yields B}
{\Pi_2 \yields E & \infer[(\CUT)]
{\Gamma, \Pi_1, F, \Pi_3, \Delta \yields B}
{\Pi_1, F, \Pi_3 \yields A & \Gamma, A, \Delta \yields B}}
$$

{\em Case 3 (left principal).} $\Pi \yields A$ is obtained by a rule introducing the main connective of $A$. Let $\beta + 1$ be the rank of $\Gamma, A, \Delta \yields B$. Consider several subcases.

{\em Subcase 3.1 (right axiom).} $\Gamma, A, \Delta \yields B$ is the $(\mathrm{id})$ axiom, $A \yields A$. Cut disappears.

{\em Subcase 3.2 (right non-principal).} $\Gamma, A, \Delta \yields B$ is obtained by a rule $\mathrm{R}$ which does not change $A$.
This case is considered similarly to Case 2. Indeed, if $\mathrm{R}$ is $(\mconj\yields)$, $(\mathbf{1}\yields)$, $(\adisj\yields)$, $(\aconj\yields)$, $(\KStar\yields)_\omega$, $(\yields\SL)$, $(\yields\BS)$, $(\yields\aconj)$, $(\yields\adisj)$, $({!}\yields)$, $(\PERM)$, $(\NCONTR)$, or $(\WEAK)$, then it derives $\Gamma, A, \Delta \yields B$ from one or several premises of the form $\widetilde{\Gamma}, A, \widetilde{\Delta} \yields \widetilde{B}$.
By induction (on the third parameter), we obtain cut-free derivability of $\widetilde{\Gamma}, \Pi, \widetilde{\Delta} \yields \widetilde{B}$, and then apply $\mathrm{R}$.

For ``branching'' rules $(\SL\yields)$, $(\BS\yields)$, $(\yields\mconj)$, and $(\yields\KStar)_n$, the cut formula $A$ goes to one of the premises. For this premise, we proceed by induction, and afterwards apply $\mathrm{R}$. Let us show this on $(\yields\KStar)_n$; other cases for $(R)$ are more standard and considered in~\cite{KKNS-mscs-2018}.
$$
\infer[(\CUT)]{\Delta_1, \ldots, \Delta_{i-1}, \Delta'_i, \Pi, \Delta''_i, \Delta_{i+1}, \ldots, \Delta_n \yields B^*}
{\Pi \yields A & \infer[(\yields\KStar)_n]{\Delta_1, \ldots, \Delta_{i-1}, A, \Delta''_i, \Delta_{i+1}, \ldots, \Delta_n \yields B^*}{\Delta_1 \yields B & \ldots & \Delta_{i-1} \yields B & \Delta'_i A, \Delta''_i \yields B & \Delta_{i+1} \yields B & \ldots & \Delta_n \yields B}}
$$
transforms into
$$
\infer[(\yields\KStar)_n]{\Delta_1, \ldots, \Delta_{i-1}, \Delta'_i, \Pi, \Delta''_i, \Delta_{i+1}, \ldots, \Delta_n \yields B^*}
{\Delta_1 \yields B & \ldots & \Delta_{i-1} \yields B &
\infer[(\CUT)]{\Delta'_i, \Pi, \Delta''_i \yields B}
{\Pi \yields A & \Delta'_i, A, \Delta''_i \yields B} &
\Delta_{i+1} \yields B & \ldots & \Delta_n \yields B}
$$

Finally, the case of $(\yields {!})$ is specific. In this case we use the fact that $A = {!}^r A'$ (for some $r$) and $\Pi \yields A$ was also derived using $(\yields {!})$ (left principality). Thus, $\Pi = {!}^{r_1} C_1, \ldots, {!}^{r_k} C_k$, and the whole situation is as follows:
$$
\infer[(\CUT)]{{!}^{s_1} E_1, \ldots, {!}^{s_{i-1}} E_{i-1},
{!}^{r_1} C_1, \ldots, {!}^{r_k} C_k, {!}^{s_{i+1}} E_{i+1}, \ldots {!}^{s_n} E_n \yields {!}^s B'}
{{!}^{r_1} C_1, \ldots, {!}^{r_k} C_k \yields {!}^r A' &
\infer[(\yields{!})]{{!}^{s_1} E_1, \ldots, {!}^{s_{i-1}} E_{i-1}, {!}^r A',
{!}^{s_{i+1}} E_{i+1}, \ldots {!}^{s_n} E_n \yields {!}^s B'}
{{!}^{s_1} E_1, \ldots, {!}^{s_{i-1}} E_{i-1}, {!}^r A',
{!}^{s_{i+1}} E_{i+1}, \ldots {!}^{s_n} E_n \yields B'}}
$$
Here cut gets propagated through $(\yields {!})$ as follows:
$$
\infer[(\yields {!})] {{!}^{s_1} E_1, \ldots, {!}^{s_{i-1}} E_{i-1},
{!}^{r_1} C_1, \ldots, {!}^{r_k} C_k, {!}^{s_{i+1}} E_{i+1}, \ldots {!}^{s_n} E_n \yields {!}^s B'}
{\infer[(\CUT)]{{!}^{s_1} E_1, \ldots, {!}^{s_{i-1}} E_{i-1},
{!}^{r_1} C_1, \ldots, {!}^{r_k} C_k, {!}^{s_{i+1}} E_{i+1}, \ldots {!}^{s_n} E_n \yields  B'}
{{!}^{r_1} C_1, \ldots, {!}^{r_k} C_k \yields {!}^r A' &
{!}^{s_1} E_1, \ldots, {!}^{s_{i-1}} E_{i-1}, {!}^r A',
{!}^{s_{i+1}} E_{i+1}, \ldots {!}^{s_n} E_n \yields B'}}
$$
The new application of $(\yields {!})$ is legal due to transitivity of the preorder on subexponential labels:
$r_j \succcurlyeq r \succcurlyeq s$. (Here the first inequation is due to the fact that the left premise of cut was introduced by $(\yields{!})$.)

{\em Subcase 3.3 (principal vs.\ principal).}
$\Gamma, A, \Delta \yields B$ is obtained by the left logical rule which introduces the main connective of $A$. In this subcase cut for $A$ gets reduced to cuts for its subformulae, which are eliminated by induction on the first parameter (complexity of the formula being cut).

Consider possible situations depending on the main connective of $A$.
\begin{enumerate}
 \item $A = A_1 \SL A_2$:
 $$
 \infer[(\CUT)]
 {\Gamma, \Pi, \Upsilon, \Delta \yields B}
 {\infer[(\yields\SL)]{\Pi \yields A_1 \SL A_2}
 {\Pi, A_2 \yields A_1} &
 \infer[(\SL\yields)]{\Gamma, A_1 \SL A_2, \Upsilon,
 \Delta \yields B}{\Upsilon \yields A_2 & \Gamma, A_1, \Delta \yields B}}
 $$
 gets transformed into
 $$
 \infer[(\CUT)]
 {\Gamma, \Pi, \Upsilon, \Delta \yields B}
 {\Upsilon \yields A_2 & \infer[(\CUT)]{\Gamma, \Pi, A_2, \Delta \yields B}{\Pi, A_2 \yields A_1 & \Gamma, A_1, \Delta \yields B}}
 $$
 Here both cuts have smaller complexity of the formula being cut. Thus, we apply induction hypothesis and first establish cut-free derivability of $\Gamma, \Pi, A_2, \Delta \yields B$, then of $\Gamma, \Pi, \Phi, \Delta \yields B$.
 \item $A = A_2 \BS A_1$. Symmetric.
 \item $A = A_1 \mconj A_2$:
 $$
 \infer[(\CUT)]
 {\Gamma, \Pi_1, \Pi_2, \Delta \yields B}
 {\infer[(\yields\mconj)]{\Pi_1, \Pi_2 \yields A_1 \mconj A_2}
 {\Pi_1 \yields A_1 & \Pi_2 \yields A_2} &
 \infer[(\mconj\yields)]{\Gamma, A_1 \mconj A_2, \Delta \yields B}{\Gamma, A_1, A_2, \Delta \yields B}}
 $$
 gets transformed into
 $$
 \infer[(\CUT)]
 {\Gamma, \Pi_1, \Pi_2, \Delta \yields B}
 {\Pi_1 \yields A_1 & \infer[(\CUT)]
 {\Gamma, A_1, \Pi_2, \Delta \yields B}
 {\Pi_2 \yields A_2 &
 \Gamma, A_1, A_2, \Delta \yields B}}
 $$
 Again, complexity of the formula being cut gets reduced.
 \item $A = A_1 \adisj A_2$:
 $$
 \infer[(\CUT)]{\Gamma, \Pi, \Delta \yields B}
 {\infer[(\yields\adisj)]{\Pi \yields A_1 \adisj A_2}
 {\Pi \yields A_i} &
 \infer[(\adisj\yields)]{\Gamma, A_1 \adisj A_2, \Delta \yields B}{\Gamma, A_1, \Delta \yields B &
 \Gamma, A_2, \Delta \yields B}}
 $$
 transforms into
 $$
 \infer[(\CUT)]{\Gamma, \Pi, \Delta \yields B}
 {\Pi \yields A_i & \Gamma, A_i, \Delta \yields B}
 $$
 \item $A = A_1 \aconj A_2$:
 $$
 \infer[(\CUT)]{\Gamma, \Pi, \Delta \yields B}
 {\infer[(\yields\aconj)]{\Pi \yields A_1 \aconj A_2}
 {\Pi \yields A_1 & \Pi \yields A_2} &
 \infer[(\aconj\yields)]{\Gamma, A_1 \aconj A_2, \Delta \yields B}
 {\Gamma, A_j, \Delta \yields B}}
 $$
 transforms into
 $$
 \infer[(\CUT)]{\Gamma, \Pi, \Delta \yields B}
 {\Pi \yields A_j & \Gamma, A_j, \Delta \yields B}
 $$
 \item $A = A_1^*$:
 $$
 \infer[(\CUT)]
 {\Gamma, \Pi_1, \ldots, \Pi_n, \Delta \yields B}
 {\infer[(\yields\KStar)_n]{\Pi_1, \ldots, \Pi_n \yields A_1^*}{\Pi_1 \yields A_1 & \ldots & \Pi_n \yields A_1} &
 \infer[(\KStar\yields)_\omega]
 {\Gamma, A_1^*, \Delta \yields B}
 {\bigl( \Gamma, A_1^k, \Delta \bigr)_{k=0}^{\infty} \yields B}}
 $$
 transforms into
 $$
 \infer[(\CUT),\mbox{ $n$ times}]
 {\Gamma, \Pi_1, \ldots, \Pi_n, \Delta \yields B}
 {\Pi_1 \yields A_1 & \ldots & \Pi_n \yields A_1 &
 \Gamma, A_1^n, \Delta \yields B}
 $$
 (out of the premises of the $\omega$-rule we take the one
 with $k = n$, others get ignored).

 \item $A = {!}^s A'$:
 $$
 \infer[(\CUT)]{\Gamma, \Pi, \Delta \yields B}
 {\infer[(\yields{!})]{\Pi \yields {!}^s A'}{\Pi \yields A'}
 & \infer[({!}\yields)]{\Gamma, {!}^s A', \Delta \yields B}
 {\Gamma, A', \Delta \yields B}}
 $$
 transforms into
 $$
 \infer[(\CUT)]{\Gamma, \Pi, \Delta \yields B}
 {\Pi \yields A' & \Gamma, A', \Delta \yields B}
 $$
\end{enumerate}

{\em Subcase 3.4 (principal vs.\ !-structural).}
$A = {!}^s A'$ and the rule yielding $\Gamma, {!}^s A', \Delta \yields B$ is $(\WEAK)$, $(\PERM)$, or $(\NCONTR)$ operating ${!}^s A'$.

In the $(\WEAK)$ case, cut gets replaced by a series of weakenings:
$$
\infer[(\CUT)]{\Gamma, \Pi, \Delta \yields B}{\Pi \yields {!}^w A' & \infer[(\WEAK),\ w \in \Wc]{\Gamma, {!}^w A', \Delta \yields B}{\Gamma, \Delta \yields B}}
$$
Recall that $\Pi \yields {!}^w A'$ was obtained by $(\yields{!})$, thus, $\Pi = {!}^{s_1} C_1, \ldots, {!}^{s_k} C_k$, and $s_i \succcurlyeq w$ for $i = 1, \ldots, k$. Therefore, $s_i \in \Wc$, and each formula of $\Pi$ can be weakened. Thus, $\Gamma, \Pi, \Delta \yields B$ is derived from $\Gamma, \Delta \yields B$ by applying $(\WEAK)$ $k$ times.

The $(\PERM)$ rule can be exchanged with cut in the following way, reducing the rank of the right premise:
$$
\infer[(\CUT)]{\Gamma, \Pi, \Delta_1, \Delta_2 \yields B}
{\Pi \yields {!}^p A' &
\infer[(\PERM),\ p\in \Ec]{\Gamma, {!}^p A', \Delta_1, \Delta_1 \yields B}
{\Gamma, \Delta_1, {!}^p A', \Delta_2 \yields B}
}
$$
gets replaced with
$$
\infer[(\PERM)\mbox{ several times}]{\Gamma, \Pi, \Delta_1, \Delta_2 \yields B}
{\infer[(\CUT)]{\Gamma, \Delta_1, \Pi, \Delta_2 \yields B}
{\Pi \yields {!}^p A' &
\Gamma, \Delta_1, {!}^p A', \Delta_2 \yields B}}
$$
Since $\Pi \yields {!}^p A'$ is obtained by $(\yields {!})$, $\Pi = {!}^{r_1} C_1, \ldots, {!}^{r_k} C_k$, where $r_i \succcurlyeq p$. Therefore, $r_i \in \Ec$, and permutation rules can be applied to each formula in $\Pi$.

In the $(\NCONTR)$ case, cut gets replaced by mix with a smaller rank of the right premise:
$$ \infer[(\CUT)]{\Gamma, \Pi, \Delta_1, \Delta_2 \yields B}
{\Pi \yields {!}^c A' & \infer[(\NCONTR),\ c \in \Cc]{\Gamma, {!}^c A', \Delta_1, \Delta_2 \yields B}{\Gamma, {!}^c A', \Delta_1, {!}^c A', \Delta_2 \yields B}}$$
transforms into
$$
\infer[(\mathrm{mix}),]{\Gamma, \Pi, \Delta_1, \Delta_2 \yields B}
{\Pi \yields {!}^c A' & \Gamma, {!}^c A', \Delta_1, {!}^c A', \Delta_2 \yields B}
$$
and we use the induction hypothesis of {\em mix elimination} (recall that we proceed by joint transfinite induction) below. The usage of mix here obeys our condition: its right premise is introduced by $(\yields {!})$, since we are in the left-principal case.

\subsubsection*{Mix Elimination}

We eliminate mix only in the situation where
%
%
%
%
$\Pi \yields {!}^c A$ is obtained by $(\yields {!})$.
We proceed by induction on the rank of the right premise of mix and consider several cases. When using the induction hypothesis for mix, we shall maintain the property that its left premise is principal. In fact, we shall use mix only with the same left premise $\Pi \yields {!}^c A$.

{\em Case 1 (right axiom).} Mix (which is actually a cut) disappears.

{\em Case 2 (right non-principal).} The right premise,
$\Delta_0, {!}^c A, \Delta_1, {!}^c A, \ldots, \Delta_i, {!}^c A, \Delta_{i+1}, {!}^c A,\linebreak \ldots, {!}^c A, \Delta_n \yields B$, is obtained by a rule $\mathrm{R}$ which does not change any of ${!}^c A$. If $\mathrm{R}$ is ``non-branching,'' that is, $(\mconj\yields)$, $(\mathbf{1}\yields)$, $(\adisj\yields)$, $(\aconj\yields)$, $(\KStar\yields)_\omega$, $(\yields\SL)$, $(\yields\BS)$, $(\yields\aconj)$, $(\yields \adisj)$, $({!}\yields)$, $(\PERM)$, $(\NCONTR)$, or $(\WEAK)$, then mix is propagated through $\mathrm{R}$ exactly as cut does.

For ``branching'' rules, $(\SL\yields)$, $(\BS\yields)$, $(\yields\mconj)$, $(\yields\KStar)_n$, the situation is a bit trickier. The instances of ${!}^c A$ can go to different branches, and in this situation we have to apply mix in all such branches and then contract the auxiliary $\Pi$'s into one. We illustrate this on the example of $(\yields\KStar)_3$ with 5 instances of ${!}^c A$ in mix, where two of these instances go to one branch and three to another one (the third branch does not take any). In this situation, mix is applied as follows
$$\small
\infer[(\mathrm{mix})]
{\Delta_0, \Delta_1, \Delta'_2, \Delta''_2, \Delta'''_2, \Delta_3, \Pi, \Delta_4, \Delta_5 \yields B^*}
{\Pi \yields {!}^c A & \infer[(\yields\KStar)]{\Delta_0, {!}^c A, \Delta_1, {!}^c A, \Delta'_2, \Delta''_2, \Delta'''_2, {!}^c A, \Delta_3, {!}^c A, \Delta_4, {!}^c A, \Delta_5 \yields B^*}
{\Delta_0, {!}^c A, \Delta_1, {!}^c A, \Delta'_2 \yields B &
\Delta''_2 \yields B &
\Delta'''_2, {!}^c A, \Delta_3, {!}^c A, \Delta_4, {!}^c A, \Delta_5 \yields B}}
$$
and using mix with smaller ranks of the right premise (while keeping the same left one) we can produce the following derivation
$$
\small
\infer[(\yields\KStar)]
{\Delta_0, \Delta_1, \Pi, \Delta'_2, \Delta''_2, \Delta'''_2, \Delta_3, \Pi, \Delta_4, \Delta_5 \yields B^*}
{\infer[(\mathrm{mix})]{\Delta_0, \Delta_1, \Pi, \Delta'_2 \yields B}{\Pi \yields {!}^c A & \Delta_0, {!}^c A, \Delta_1, {!}^c A, \Delta'_2 \yields B} &
\Delta''_2 \yields B &
\infer[(\mathrm{mix})]{
\Delta'''_2, \Delta_3, \Pi, \Delta_4, \Delta_5 \yields B}
{\Pi \yields {!}^c A & \Delta'''_2, {!}^c A, \Delta_3, {!}^c A, \Delta_4, {!}^c A, \Delta_5 \yields B}}
$$
Now we recall that each formula of $\Pi$ is of the form ${!}^{s_i} C_i$, where $s_i \succcurlyeq c$, thus $s_i \in \Cc$. Therefore, contraction can be applied to $\Pi$ as a whole,
yielding the needed sequent
$$\Delta_0, \Delta_1, \Delta'_2,  \Delta''_2, \Delta'''_2, \Delta_3, \Pi, \Delta_4, \Delta_5 \yields B^*.$$

Finally, propagation of mix through $(\yields {!})$ is performed exactly as for cut.

{\em Case 3 (principal vs.\ principal).} The right premise of mix is obtained by $({!}\yields)$ introducing one of ${!}^c A$. Here mix gets replaced by another mix of a smaller rank and a cut with a formula of smaller complexity ($A$ instead of ${!}^c A$). We illustrate this by an example with $n=3$, where $\Pi$ is put in the place of the first ${!}^c A$ and $({!}\yields)$ introduces the second one:
$$
\infer[(\mathrm{mix})]{\Delta_0, \Pi, \Delta_1, \Delta_2, \Delta_3 \yields B}
{\infer[(\yields {!})]{\Pi \yields {!}^c A}{\Pi \yields A} & \infer[({!}\yields)]
{\Delta_0, {!}^c A, \Delta_1, {!}^c A, \Delta_2, {!}^c A, \Delta_3 \yields B}
{\Delta_0, {!}^c A, \Delta_1, A, \Delta_2, {!}^c A, \Delta_3 \yields B}}
$$
transforms into
$$
\infer[(\NCONTR)\mbox{ several times}]
{\Delta_0, \Pi, \Delta_1, \Delta_2, \Delta_3 \yields B}
{\infer[(\CUT)]
{\Delta_0, \Pi, \Delta_1, \Pi, \Delta_2, \Delta_3 \yields B}
{\Pi \yields A & \infer[(\mathrm{mix})]{\Delta_0, \Pi, \Delta_1, A, \Delta_2, \Delta_3 \yields B}{\Pi \yields {!}^c A &
\Delta_0, {!}^c A, \Delta_1, A, \Delta_2, {!}^c A, \Delta_3 \yields B}}}
$$
Here $(\NCONTR)$ gets applied to $\Pi$ as a whole, since it consists of formulae with subexponentials with labels $r_i \succcurlyeq c$.

Notice that this also works when $\Pi$ is put in place of ${!}^c A$ introduced by $({!}\yields)$, provided there is another instance of ${!}^c A$ in mix. If there is no such instance, then mix is actually cut, and this case is considered in cut elimination.

{\em Case 4 (principal vs.\ !-structural).} The right premise is obtained by a structural rule operating one of the instances of ${!}^c A$. If this structural rule is $(\NCONTR)$, it just gets merged with mix, reducing its rank. If it is $(\WEAK)$ or $(\PERM)$, we first consider the case where it does not operate the instance of ${!}^c A$ replaced by $\Pi$ (``active instance''). In this case the structural rule again gets merged with mix, reducing its rank. If $(\PERM)$ is applied to the active instance, then $c \in \Ec$, and so are all $r_i$ in $\Pi$. Thus, $(\PERM)$ can be applied to $\Pi$ as a whole, and mix gets propagated through $(\PERM)$. Finally, if the active instance is weakened, then $c \in \Wc$, and by $\Wc \cap \Cc \subseteq \Ec$ we can apply $(\PERM)$ to $\Pi$. If there is another instance of ${!}^c A$, we proceed as follows:
$$
\infer[(\mathrm{mix})]{\Delta_0, \Pi, \Delta_1, \Delta_2, \Delta_3 \yields B}
{\Pi \yields {!}^c A & \infer[(\WEAK)]{\Delta_0, {!}^c A, \Delta_1, {!}^c A, \Delta_2, {!}^c A, \Delta_3 \yields B}
{\Delta_0, \Delta_1, {!}^c A, \Delta_2, {!}^c A, \Delta_3, \yields B}}
$$
transforms into
$$
\infer[(\PERM)\mbox{ several times}]
{\Delta_0, \Pi, \Delta_1, \Delta_2, \Delta_3 \yields B}
{\infer[(\mathrm{mix})]{\Delta_0, \Delta_1, \Pi, \Delta_2, \Delta_3 \yields B}{\Pi \yields {!}^c A &
\Delta_0, \Delta_1, {!}^c A, \Delta_2, {!}^c A, \Delta_3, \yields B}}
$$
If the active instance is the only one, we are in the cut elimination case.
\end{proof}

Now we are ready to prove the cut elimination theorem in its full generality:

\begin{Th}\label{Th:cutelim}
The cut rule in ${!}\mathrm{ACT}_\omega$ is eliminable, that is, any sequent that can be proved using cut, can be also proved without cut.
\end{Th}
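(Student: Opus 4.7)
The plan is to reduce the full cut elimination statement to the one-cut elimination result of Theorem~\ref{Th:onecut} via transfinite induction on derivation rank in the system \emph{with} cut. For a sequent $s$ derivable in $\eACTomega$, define its $\Dcut$-rank to be the least $\alpha$ such that $s \in \Dcut^\alpha(\varnothing)$. Exactly as in the analogous argument for $\Dnocut$ given just before Theorem~\ref{Th:onecut}, this rank must always be a successor ordinal: if it were limit, then $\Dcut^\alpha(\varnothing) = \bigcup_{\beta < \alpha} \Dcut^\beta(\varnothing)$ would already contain $s$ at some $\beta < \alpha$, contradicting minimality. So we may write the rank as $\alpha+1$, and $s$ is obtained from sequents of rank $\leqslant \alpha$ by a single application of some rule of $\eACTomega$, possibly $(\CUT)$.

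The induction hypothesis is that every sequent of $\Dcut$-rank $\leqslant \alpha$ admits a cut-free derivation. First I would dispose of the case where $s$ is obtained by a rule $\mathrm{R}$ other than $(\CUT)$: each premise has rank $\leqslant \alpha$, hence by the inductive hypothesis each has a cut-free derivation, and applying $\mathrm{R}$ to these cut-free derivations yields a cut-free derivation of $s$. (Note that for the $\omega$-rule $(\KStar\yields)_\omega$ this step produces a well-founded tree because each of the countably many premises is replaced by a well-founded cut-free subderivation; the result is still well-founded.)

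The interesting case is when the last rule is $(\CUT)$: here $s = \Gamma, \Pi, \Delta \yields B$ with premises $\Pi \yields A$ and $\Gamma, A, \Delta \yields B$, both of $\Dcut$-rank $\leqslant \alpha$. By the inductive hypothesis both premises admit cut-free derivations, so Theorem~\ref{Th:onecut} directly supplies a cut-free derivation of $s$. This completes the transfinite induction, yielding cut-free derivations for every sequent derivable in $\eACTomega$.

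There is no genuine obstacle beyond what has already been done for Theorem~\ref{Th:onecut}; the main thing to be careful about is the well-foundedness of the construction. The rank measure is an honest ordinal, the recursion descends strictly (each rule-application is resolved by invoking the hypothesis at a strictly smaller rank, and the final application of Theorem~\ref{Th:onecut} is a single non-recursive step), and the limit/successor dichotomy is handled uniformly by the observation that ranks are successors. Hence the argument is a clean transfinite recursion over the rank of the given cut-containing derivation, with Theorem~\ref{Th:onecut} serving as the only nontrivial step.
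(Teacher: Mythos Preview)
Your proposal is correct and follows essentially the same approach as the paper: transfinite induction on the $\Dcut$-rank, with the non-cut case handled by applying the rule to cut-free premises and the cut case handled by invoking Theorem~\ref{Th:onecut}. The paper phrases this as a least-counterexample argument rather than a direct induction, but the content is identical.
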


Notice that Theorem~\ref{Th:cutelim} is not a trivial corollary of Theorem~\ref{Th:onecut}, because an infinite derivation could include infinitely many cuts. Proving Theorem~\ref{Th:cutelim} requires yet another transfinite induction.

\begin{proof}
For convenience, we shall write ${\mathcal{S}}_\alpha$ instead of $\Dcut^\alpha (\varnothing)$ (now cut is allowed in derivations).

 Suppose $\alpha$ is the smallest ordinal such that ${\mathcal{S}}_\alpha$ contains a sequent $\Pi \yields B$ which is not provable without cut. Obviously, $\alpha$ should be a successor ordinal, $\alpha = \beta+1$. Therefore, this sequent is obtained by applying a rule to sequents from ${\mathcal{S}}_\beta$. Since $\beta < \alpha$, these sequents can be derived without cut. The rule yielding $\Pi \yields B$ should be cut, otherwise $\Pi \yields B$ is cut-free derivable. However, if the rule is cut, $\Pi \yields B$ is also cut-free derivable by Theorem~\ref{Th:onecut}. Contradiction.
\end{proof}

\section{Complexity aspects}\label{sec-comp}

Recall that $\Dcut$ is the immediate derivability operator of $!\mathrm{ACT}_\omega$, and its least fixed point is exactly the set of theorems provable in this logic.

We start with the interesting case of $\Cc \ne \varnothing$ (that is, at least one subexponential allows non-local contraction) and prove, under this condition, the following:

\begin{enumerate}
 \item the least fixed point of $\Dcut$ is $\Pi^1_1$-complete (that is, the derivability problem for $!\mathrm{ACT}_\omega$ is $\Pi_1^1$-complete);

\item the closure ordinal of $\Dcut$ is $\omega_1^{\mathrm{CK}}$.
\end{enumerate}

Next, we also consider the case of $\Cc = \varnothing$ (no subexponential allows contraction) and show that complexity there does not raise if compared to $\mathrm{ACT}_\omega$ without subexponentials. Namely, we prove that the closure ordinal is bounded by $\omega^\omega$ and that the derivability problem belongs to $\Pi_1^0$ (and, by Buszkowski's lower bound~\cite{Buszkowski-2007}, it is $\Pi_1^0$-complete).

It is interesting to compare these results with previously known ones for fragments of $!\mathrm{ACT}_\omega$. Recall that $!\mathrm{ACT}_\omega$ is a combination of two systems, infinitary action logic $\mathrm{ACT}_\omega$ and multiplicative-additive Lambek calculus with subexponentials, denoted by $\mathrm{SMALC}_\Sigma$ (here $\Sigma$ is the subexponential signature). These two systems are both extensions of the multiplicative-additive Lambek calculus, $\mathrm{MALC}$, which is, in its turn, an extension of the purely multiplicative Lambek calculus, $\mathrm{L}$.

The following table summarizes the complexity results for these systems:
\begin{center}
 \begin{tabular}{|l|l|l|}\hline
  \multicolumn{1}{|c|}{\bf System} &
  \multicolumn{1}{|c|}{\bf Complexity} &
  \multicolumn{1}{|c|}{\bf Reference(s)} \\\hline
  \strut $\mathrm{L}$ & NP-complete & Pentus, 2006~\cite{Pentus-2006} \\\hline
  \strut $\mathrm{MALC}$ & PSPACE-complete & Kanovich, 1994~\cite{Kanovich-1994}; \\
   & & Kanovich et al., 2019~\cite{kkns-wollic-2019} \\\hline
  \strut $\mathrm{ACT}_\omega$ & $\Pi_1^0$-complete &
  Buszkowski, 2007~\cite{Buszkowski-2007}; \\
  & & Palka, 2007~\cite{Palka-2007} \\\hline
  \strut $\mathrm{SMALC}_\Sigma$ with $\Cc = \varnothing$ &
  PSPACE-complete & \multirow{2}{*}{Kanovich et al., 2018~\cite{KKNS-mscs-2018}} \\\cline{1-2}
  \strut $\mathrm{SMALC}_\Sigma$ with $\Cc \ne \varnothing$ &
  $\Sigma_1^0$-complete & 
  \\\hline
  \strut $!\mathrm{ACT}_\omega$ with $\Cc = \varnothing$ & $\Pi^0_1$-complete & \multirow{2}{*}{this article} \\\cline{1-2}
  \strut $!\mathrm{ACT}_\omega$ with $\Cc \ne \varnothing$ & $\Pi_1^1$-complete &
  \\\hline
 \end{tabular}

\end{center}

From this table, we see that the two sources of undecidability are the Kleene star and the subexponential which allows non-local contraction ($!^c$ with $c \in \Cc$). Another observation is that only the combination of these two yields a system which is not hyperarithmetical.

\subsection{$\mathbf{\Pi}_1^1$-boundedness of $!\mathbf{ACT}_\omega$}

Let us start by establishing that the derivability operator of $! \mathrm{ACT}_{\omega}$ is arithmetical, and can be presented in a `positive' form.
To be precise, call an $\mathcal{L}_2$-formula $\Phi \left( \dots, X, \dots \right)$ \emph{positive in $X$} iff no free occurrence of $X$ in $\Phi$
is in the scope of an odd number of nested negations.\footnote{Remember,
$\rightarrow$ is treated as defined, not as primitive.}
Then:

\begin{Prop} \label{prop-id-d}
There exists an arithmetical formula $\Phi \left( x, X \right)$ positive in $X$ such that for all $S \subseteq \mathrm{Seq}$,
\[
{{\Dcut} \left( S \right)}\ =\
{\left\{ s \in \mathrm{Seq} \mid \mathfrak{N} \vDash \Phi \left( {\sharp s}, {\sharp S} \right) \right\}}
\]
where $\sharp s$ and $\sharp S$ denote the G\"{o}del number of $s$ and the set of G\"{o}del numbers of elements of $S$ respectively.
\end{Prop}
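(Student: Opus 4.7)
The plan is to fix a standard G\"odel numbering of formulas and finite sequences of formulas, check that the basic syntactic predicates (``$x$ codes a formula,'' ``$x$ codes a sequent,'' ``$x$ codes the concatenation of the sequences coded by $y$ and $z$,'' ``$x$ codes $A^n$ where $y$ codes $A$,'' ``$x$ codes a subexponentially labelled formula with label in $\mathcal{W}$,'' etc.) are primitive recursive, and then literally transcribe each inference rule of $\eACTomega$ as an arithmetical clause asserting that the premises belong to $X$. The formula $\Phi(x,X)$ will be the disjunction of the clause $x \in X$ with one clause per rule schema; since ``is in $X$'' is the only atomic formula involving $X$ and it appears only positively, positivity in $X$ will be immediate.

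First I would treat the rules with finitely many premises. For each such schema, e.g.\ $(\BS \yields)$, the corresponding clause has the form
\[
\exists y_1\, \exists y_2\, (\mathrm{Rule}_{\BS\yields}(x,y_1,y_2) \wedge y_1 \in X \wedge y_2 \in X),
\]
where $\mathrm{Rule}_{\BS\yields}$ is a primitive recursive predicate expressing that $x$ is the code of a sequent of the form $\Gamma,\Pi,A \BS B,\Delta \yields C$ while $y_1, y_2$ code $\Pi \yields A$ and $\Gamma, B, \Delta \yields C$ respectively (with the correct coding of $\Gamma,\Delta,\Pi,A,B,C$). The axioms $(\ID)$, $(\yields\U)$, $(\Z\yields)$, and $(\yields\KStar)_0$ give clauses with no $X$-atoms at all. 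The side conditions on labels ($w \in \Wc$, $c \in \Cc$, $e \in \Ec$, $s_i \succcurlyeq s$) are decidable relations on the finite data $\mathcal{I},\preccurlyeq,\Wc,\Cc,\Ec$, hence primitive recursive in the codes.

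The key case is the infinitary rule $(\KStar \yields)_\omega$. Here the clause needs to express that $x$ codes a sequent of the form $\Gamma, A^*, \Delta \yields C$ and that for every $n \in \NN$ the code of $\Gamma, A^n, \Delta \yields C$ lies in $X$. Using the primitive recursive function $g(n,x)$ that computes, from the code of $\Gamma, A^*, \Delta \yields C$, the code of $\Gamma, A^n, \Delta \yields C$, the clause becomes
\[
\exists y\,(\mathrm{Shape}_{\KStar\yields}(x,y) \wedge \forall n\, (g(n,y) \in X)).
\]
This is arithmetical (in fact $\Pi^0_1$ modulo $X$), contains only a universal number quantifier, and uses $X$ positively. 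The rule $(\yields\KStar)_n$ for varying $n$ is handled by a single existential clause bounding $n$ and the number of non-empty $\Pi_i$'s, again decidable and positive in $X$.

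Finally I would take $\Phi(x,X)$ to be the disjunction $x \in X \vee \bigvee_{\mathrm{R}} \mathrm{Clause}_{\mathrm{R}}(x,X)$ over the finite list of rule schemata. All set-quantifiers are absent, so $\Phi$ is arithmetical; the only occurrences of $X$ are positive atoms $y \in X$, so $\Phi$ is positive in $X$; and by construction $\mathfrak{N} \vDash \Phi(\sharp s, \sharp S)$ precisely when either $s \in S$ or $s$ results from one rule application to sequents in $S$, which is exactly $s \in \Dcut(S)$. The only step requiring care is the bookkeeping for G\"odel numbering sequences of formulas and the primitive recursive functions producing $A^n$ and the concatenations in the various rule schemata; once those are set up, the clauses write themselves.
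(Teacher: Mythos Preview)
Your proposal is correct and follows essentially the same strategy as the paper: a finite disjunction over the rule schemata, with each clause built from a computable ``shape'' predicate conjoined with positive $X$-atoms for the premises, and the $\omega$-rule handled by a universal number quantifier over $n$. The only noteworthy difference is in the exact packaging of the $\omega$-rule clause: the paper encodes it via a ternary computable relation $P = \{(n,\sharp(\Gamma,A^*,\Delta\vdash C),\sharp(\Gamma,A^n,\Delta\vdash C))\}$ and writes $\forall y\,\exists z\,(\Psi(y,x,z)\wedge z\in X)$, whereas you first existentially fix the decomposition $y$ and then universally quantify over $n$ with a premise-generating function $g(n,y)$; your quantifier order is arguably the more careful one, since a sequent may have several occurrences of starred formulas in its antecedent and the decomposition is not determined by the conclusion alone.
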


\begin{proof}
Since $! \mathrm{ACT}_{\omega}$ consists of finitely many rules, and the arithmetical formulas positive in $X$ are closed under finite disjunction, we
only need to show that for each rule $\mathrm{R}$ of $! \mathrm{ACT}_{\omega}$ there is an arithmetical formula $\Phi_{\mathrm{R}} \left( x, X \right)$
positive in $X$ such that for any $s \in \mathrm{Seq}$ and $S \subseteq \mathrm{Seq}$,
\[
{\mathfrak{N} \vDash \Phi_{\mathrm{R}} \left( {\sharp s}, {\sharp S} \right)}
\quad \Longleftrightarrow \quad
\begin{array}{c}
s ~ \text{can be obtained from}\vspace{0.5mm}\\
\text{elements of} ~ S ~ \text{by one application of} ~ \mathrm{R} .
\end{array}
\]
(Remember, every scheme of $! \mathrm{ACT}_{\omega}$, e.g.\ `${\left( \vdash {}^{\ast} \right)}_{n}, ~ {n \geqslant 0}$', is viewed as a single rule.)

Probably the most interesting case is where $\mathrm{R} = {\left( ^{\ast} \vdash \right)}_{\omega}$~--- because it deals with $\omega$ premises. To this
end, take
\begin{multline*}
\qquad \quad
P\ :=\
{\{
\left( n, {\sharp \left( \Gamma, A^{\ast}, \Delta \vdash C \right)}, {\sharp \left( \Gamma, A^{n}, \Delta \vdash C \right)} \right) \mid}\\
{n \in \mathbb{N}, ~ \Gamma \in \mathrm{List}, ~ A \in \mathrm{Form}, ~ \Delta \in \mathrm{List} ~ \text{and} ~ C \in \mathrm{Form}
\}} .
\qquad \quad
\end{multline*}
Evidently $P$, being a computable set, is definable in $\mathfrak{N}$ by some arithmetical formula $\Psi \left( x, y, z \right)$. So let
\[
{\Phi_{\mathrm{R}} \left( x, X \right)}\ :=\
{\forall y}\, {\exists z}\, {\left( \Psi \left( y, x, z \right) \wedge z \in X \right)} .
\]
One readily checks that $\Phi_{\mathrm{R}}$ does the job.

Similar but easier arguments cover the other cases.
\end{proof}

Using this fact we can get:

\begin{Th} \label{th-com-d-1}
The derivability problem for
$!\mathrm{ACT}_\omega$ is $\Pi^1_1$-bounded.
\end{Th}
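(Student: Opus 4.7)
The plan is to observe that this statement is essentially an instance of the general $\Pi^1_1$-fixed-point machinery (Folklore \ref{folk-id-2}(a)) applied to the operator $\Dcut$. By Proposition \ref{prop-id-d}, $\Dcut$ is represented by an arithmetical formula $\Phi(x, X)$ positive in $X$, so $\Dcut$ is a monotone arithmetical operator and a fortiori a monotone $\Pi^1_1$-operator. Its least fixed point coincides with the set of derivable sequents of $!\mathrm{ACT}_\omega$, as noted at the end of Section~\ref{sec-prel}, so Folklore \ref{folk-id-2}(a) immediately yields $\Pi^1_1$-boundedness of the derivability problem.

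For concreteness, I would also spell out the defining formula directly via the Knaster--Tarski characterization of least fixed points of monotone operators as intersections of pre-fixed points. With $\Phi(x, X)$ as above, the set of derivable sequents is definable by
\[
\Theta(x)\ :=\ \forall X\,\bigl[\,\forall y\,(\Phi(y, X) \rightarrow y \in X) \rightarrow x \in X\,\bigr].
\]
This is in prenex $\Pi^1_1$ form: a single leading universal set quantifier over an arithmetical matrix. The direction that every derivable sequent satisfies $\Theta$ follows by transfinite induction on the rank of a derivation; at each inductive step the hypothesis $\forall y\,(\Phi(y, X) \rightarrow y \in X)$ propagates membership from children to parent, and at the leaves it supplies axioms. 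The converse direction uses the set of all derivable sequents itself as the witness for $X$: by the characterization established in Section~\ref{sec-calculus}, this set is closed under the rules of $!\mathrm{ACT}_\omega$, so whenever $\mathfrak{N} \vDash \Theta(\sharp s)$ holds we may instantiate $X$ with this set and conclude that $s$ is derivable.

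No substantive obstacle is expected. Everything needed has been assembled in Section~\ref{sec-prel} and Proposition~\ref{prop-id-d}, and the argument is essentially a one-line specialisation of a general theorem about inductive definitions. The real complexity-theoretic work---namely matching $\Pi^1_1$-hardness and the identification of the closure ordinal of $\Dcut$ as $\omega_1^{\mathrm{CK}}$ in the case $\Cc \ne \varnothing$---will require new constructions later in Section~\ref{sec-comp}; the present upper bound is of the ``free of charge'' variety extracted directly from the $\Pi^1_1$-operator framework.
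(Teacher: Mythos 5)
Your proposal is correct and follows essentially the same route as the paper: Proposition~\ref{prop-id-d} shows $\Dcut$ is a monotone arithmetical (hence $\Pi^1_1$-) operator, and Folklore~\ref{folk-id-2}(a) applied to its least fixed point gives the bound. Your explicit formula $\Theta(x)$ merely unpacks the standard proof of that folklore result, so it adds detail but no new ideas.
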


\begin{proof}
%
By Proposition~\ref{prop-id-d}, $\Dcut$ is an arithmetical operator, hence also a $\Pi^1_1$-operator. Therefore its least fixed
point~--- which coincides with the set of sequents provable in $!\mathrm{ACT}_\omega$~--- must be $\Pi^1_1$-bounded by Folklore~\ref{folk-id-2}.
%
\end{proof}

In the next subsection we prove that $!\mathrm{ACT}_\omega$ is $\Pi_1^1$-hard, provided that $\Cc \ne \varnothing$.


\subsection{$\mathbf{\Pi}^1_1$-hardness of $!\mathbf{ACT}_\omega$}

In order to prove $\Pi_1^1$-hardness of $!\mathrm{ACT}_{\omega}$, we wish to use Kozen's result~\cite{Kozen-2002}, which establishes the same complexity bound for deciding entailment of an equation from a finite set of equations in *-continuous Kleene algebras. Since Kozen's result is formulated in the restricted language of Kleene algebras, that is, in the language of only $\mconj$, $\Z$, $\mathbf{1}$, $\adisj$, and $\KStar$, we first formulate the correspondent fragment of our logic. By $\mathrm{KA}_\omega$ we denote the logic of *-continuous Kleene algebras, defined by taking axioms $(\mathrm{id})$, $(\yields\mathbf{1})$, $(\Z\yields)$ and the following rules of $\mathrm{ACT}_\omega$:
$(\mconj\yields)$, $(\yields\mconj)$, $(\mathbf{1}\yields)$,
$(\adisj\yields)$,
$(\yields\adisj)$, $(\KStar\yields)_\omega$, $(\yields\KStar)_n$,
and $(\mathrm{cut})$. If $\Ec = \{ U_1 \yields V_1, \ldots,
U_n \yields V_n \}$, then  $\mathrm{KA}_\omega + \Ec$ denotes
$\mathrm{KA}_\omega$ extended with $\Ec$ as a set of additional axioms.
For sequents derivable in $\mathrm{KA}_\omega + \Ec$ we say that they are ``derivable in $\mathrm{KA}_\omega$ from $\Ec$.''

We show that ${!}$ and $\SL$ allow a variant of deduction theorem, internalizing derivability in $\mathrm{KA}_\omega$ from $\Ec$ into ``pure'' derivability in $!\mathrm{ACT}_\omega$, without extra axioms.
The technique used here is goes back to~\cite{KKNS-mscs-2018}; however, in the presence of Kleene star we can achieve higher complexity boundaries.

\begin{Lemm}\label{Lm:roundrobin}
Let $\Pi \yields B$ be a sequent in the language of Kleene algebras
($\mconj$, $\adisj$, $\KStar$) and let  $U_i$ and $V_i$, $i = 1, \ldots, n$, be formulae in the same language.
 Then the following are equivalent:
 \begin{enumerate}
  \item $\Pi \yields B$ is derivable in $\mathrm{KA}_\omega$ from $\Ec = \{ U_1 \yields V_1, \ldots, U_n \yields V_n \}$;
  \item the sequent
  $${!}^c (\mathbf{1} \SL {!}^c (V_1 \SL U_1)),
  {!}^c (V_1 \SL U_1), \ldots, {!}^c (\mathbf{1} \SL (V_n \SL U_n)),
  {!}^c (V_n \SL U_n), \Pi \yields B$$
  is derivable in $!\mathrm{ACT}_\omega$, where $c \in \Cc$;
  \item the sequent
  $$
  {!}^s (V_1 \SL U_1), \ldots, {!}^s (V_n \SL U_n) \yields B
  $$
  is derivable in $!\mathrm{ACT}_\omega$, where $s \in \Wc \cap \Cc$.
 \end{enumerate}
\end{Lemm}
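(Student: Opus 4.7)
The plan is to establish the chain of implications $(1) \Rightarrow (2)$, $(1) \Rightarrow (3)$, $(2) \Rightarrow (1)$, $(3) \Rightarrow (1)$, combining proof-theoretic simulations for the ``positive'' directions with a soundness argument for the ``negative'' ones.

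For $(1) \Rightarrow (3)$: since $s \in \Wc \cap \Cc \subseteq \Ec$, the modality ${!}^s$ admits weakening, non-local contraction, and permutation, so the hypotheses ${!}^s(V_i \SL U_i)$ behave essentially as classical assumptions. I would proceed by induction on the given derivation in $\mathrm{KA}_\omega + \Ec$. Each axiom $U_i \yields V_i$ is simulated by deriving $V_i \SL U_i, U_i \yields V_i$ via $(\SL\yields)$ from two instances of $(\ID)$, then applying $({!}\yields)$ to insert ${!}^s$, and finally using $(\WEAK)$ and $(\PERM)$ to introduce and position the remaining ${!}^s$-formulas. Each rule of $\mathrm{KA}_\omega$ lifts with the ${!}^s$-block preserved: for branching rules, the two copies of the block returned by the induction hypothesis are merged via $(\NCONTR)$ and $(\PERM)$, and the infinitary rule $(\KStar\yields)_\omega$ transfers directly since the block sits outside the iterated context.

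For $(1) \Rightarrow (2)$: I would use the round-robin construction. The key move is that from any already-derived sequent $\Gamma, \Delta \yields C$ one can prepend a single pair ${!}^c(\U \SL {!}^c(V_i \SL U_i)), {!}^c(V_i \SL U_i)$ just before $\Delta$: apply $(\U\yields)$ to insert $\U$, then $(\SL\yields)$ with $(\ID)$ for ${!}^c(V_i \SL U_i)$ on the left, then $({!}\yields)$ to form the outer ${!}^c$. Iterating over $i$ prepends the full paired block. Non-local contraction on each of the two ${!}^c$-components is then used to ``deliver'' a copy of ${!}^c(V_i \SL U_i)$ next to any $U_i$ position in the antecedent, effectively simulating exchange for this formula without needing $c \in \Ec$. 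The induction on the $\mathrm{KA}_\omega + \Ec$ derivation then runs as in the previous case, with $(\NCONTR)$ on the paired block playing the role of $(\WEAK)$ and $(\PERM)$ when merging the antecedents of the premises of branching rules.

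For $(2) \Rightarrow (1)$ and $(3) \Rightarrow (1)$: these follow from soundness. Interpret $\eACTomega$-formulae in an arbitrary $*$-continuous action lattice $\mathcal{A}$ (a residuated lattice equipped with a $*$-continuous Kleene star), with each subexponential modality denoted by a suitable operator validating the structural rules attached to its label. Under the hypothesis $\llbracket U_i \rrbracket \leq \llbracket V_i \rrbracket$ one has $\U \leq \llbracket V_i \SL U_i \rrbracket$, and hence $\U$ lies below each of $\llbracket {!}^c(V_i \SL U_i) \rrbracket$, $\llbracket {!}^c(\U \SL {!}^c(V_i \SL U_i)) \rrbracket$, and $\llbracket {!}^s(V_i \SL U_i) \rrbracket$. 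Soundness of (2) or (3) therefore yields $\llbracket \Pi \rrbracket \leq \llbracket B \rrbracket$ in every $*$-continuous action lattice satisfying $\Ec$, and Kozen's completeness theorem for $\mathrm{KA}_\omega$ relative to $*$-continuous Kleene algebras delivers (1). The hard part will be the round-robin step of $(1) \Rightarrow (2)$: in the absence of $(\PERM)$ and $(\WEAK)$ for ${!}^c$, every alignment of a copy of ${!}^c(V_i \SL U_i)$ with its target $U_i$ in the antecedent must be routed through the $\U \SL$-combinator, and the careful bookkeeping of these moves across the induction on $\mathrm{KA}_\omega$ derivations is the technical heart of the proof.
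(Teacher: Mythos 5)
Your two ``positive'' simulations are essentially the paper's argument: the paper proves $1\Rightarrow 2$ by exactly the round-robin construction you describe (the $\U\SL{!}^c(V_i\SL U_i)$ components supply weakening for unused copies, non-local contraction merges the blocks at branching rules and delivers a fresh copy of ${!}^c(V_i\SL U_i)$ next to $U_i$ when simulating a hypothesis axiom, and no permutation is ever needed because $\mathrm{KA}_\omega$-derivations contain no division rules), and it then gets $3$ from $2$ by relabelling and cutting away the now-derivable combinators; your direct $1\Rightarrow 3$ is a harmless variant.

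The genuine problem is your semantic argument for $(2)\Rightarrow(1)$ and $(3)\Rightarrow(1)$. Statement (1) is, by the completeness theorem (Theorem~\ref{Th:LT}), equivalent to entailment of $\Pi\yields B$ from $\Ec$ over \emph{all} $*$-continuous Kleene algebras. Such algebras in general carry no residuals, no meets, and no operators interpreting ${!}^s$, so you cannot evaluate the sequents of (2) or (3) in them. Your soundness argument, even once you have actually constructed operators validating $({!}\yields)$, $(\yields{!})$, $(\NCONTR)$, $(\WEAK)$ and $(\PERM)$ (which is itself nontrivial and not done in the proposal), only yields entailment over the \emph{subclass} of $*$-continuous Kleene algebras that expand to $*$-continuous action lattices with subexponentials. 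Entailment over a subclass is a weaker statement, so the implication runs in the wrong direction: to close the loop you would additionally need a conservativity result, e.g.\ that every $*$-continuous Kleene algebra (or at least the Lindenbaum algebra $\mathcal{F}/\Ec$) embeds, order-reflectingly and $*$-continuously, into such an expanded structure. None of this is supplied. The paper avoids the issue entirely by a syntactic argument for $3\Rightarrow 1$: invoke cut elimination (Theorem~\ref{Th:cutelim}) to get a cut-free proof of ${!}^s(V_1\SL U_1),\dots,{!}^s(V_n\SL U_n),\Pi\yields B$, erase every formula containing $\SL$ (in particular all ${!}^s$-formulae, whose rules then trivialize), and observe that each surviving $(\SL\yields)$ step is simulated in $\mathrm{KA}_\omega+\Ec$ by two cuts against the hypothesis $U_i\yields V_i$. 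You should either adopt that erasure argument or supply the missing embedding-plus-sound-interpretation machinery; as written, the converse directions do not go through.
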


\begin{proof}

\fbox{$1 \Rightarrow 2$}
Consider a derivation of $\Pi \yields B$ in $\mathrm{KA}_\omega + \mathcal{E}$ and transform it into a derivation of
${!}^c (\mathbf{1} \SL {!}^c A_1), {!}^c A_1, \ldots, {!}^c (\mathbf{1} \SL {!}^c A_n), {!}^c A_n, \Pi \yields B$
in $! \mathrm{ACT}_\omega$.

For brevity, denote ${!}^c (\mathbf{1} \SL {!}^c A_1), {!}^c A_1, \ldots, {!}^c (\mathbf{1} \SL {!}^c A_n), {!}^c A_n$ by ${!}^c\Upsilon$. First let us show that ${!}^c \Upsilon$ actually allows weakening (although, in general, ${!}^c$ does not): if $\Pi \yields B$ is derivable, then so is ${!}^c \Upsilon, \Pi \yields B$. For weakening, we use the $\mathbf{1} \SL {!}^c (V_i \SL U_i)$ formulae, which ``cancel'' ${!}^c (V_i \SL U_i)$:
$$
\infer[(!\yields)]{{!}^c (\mathbf{1} \SL {!}^c (V_i \SL U_i)), {!}^c
(V_i \SL U_i), \Pi \yields B}
{\infer[(\SL\yields)]{\mathbf{1} \SL {!}^c (V_i \SL U_i), {!}^c
(V_i \SL U_i), \Pi \yields B}
{{!}^c (V_i \SL U_i) \yields {!}^c (V_i \SL U_i) &
\infer[(\mathbf{1}\yields)]{\mathbf{1}, \Pi \yields B}{\Pi \yields B}}}
$$

This allows, for each axiom $(\mathrm{id})$ of the form $B \yields B$, to derive the corresponding sequent ${!}^c \Upsilon, B \yields B$. Moreover, the same works for $(\yields\mathbf{1})$ and $(\yields\KStar)_0$: $\yields A^*$ and $\yields \mathbf{1}$ transform, respectively, to
${!}^c \Upsilon \yields \mathbf{1}$ and
${!}^c \Upsilon \yields A^*$. The rule $(\yields\KStar)_n$ for $n>0$ is considered below.

As for the new axioms from $\Ec$, the following derivation reduces them to the $(\mathrm{id})$ case (recall that ${!}^c (V_i \SL U_i)$ is a member of ${!}^c \Upsilon$):
$$
\infer[(\NCONTR)]{{!}^c \Upsilon, U_i \yields V_i}
{\infer[(!\yields)]{{!}^c\Upsilon, {!}^c (V_i \SL U_i), U_i \yields V_i}
{\infer[(\SL\yields)]{{!}^c \Upsilon, V_i \SL U_i, U_i \yields V_i}
{U_i \yields U_i & {!}^c \Upsilon, V_i \yields V_i}}}
$$

Rules $(\mconj\yields)$, $(\mathbf{1}\yields)$, $(\KStar\yields)_\omega$, $(\adisj\yields)$, and $(\yields\adisj)$ transform directly: one just adds the ${!}^c\Upsilon$ prefix to $\Pi$ both in the premise(s) and the conclusion.
Translations of $(\yields\mconj)$, $(\yields\KStar)_n$, for $n > 0$,  and $(\mathrm{cut})$ involve contraction.  Consequent applications of $(\NCONTR)$  merge several instances of ${!}^c\Upsilon$ into one:
$$\small
\infer=[(\NCONTR)]{{!}^c \Upsilon, \Gamma, \Delta \yields A \mconj B}
{\infer[(\mconj\yields)]{{!}^c\Upsilon, \Gamma, {!}^c \Upsilon, \Delta \yields A \mconj B}{{!}^c \Upsilon, \Gamma \yields A &
{!}^c \Upsilon, \Delta \yields B}}
\qquad\small
\infer=[(\NCONTR)]{{!}^c \Upsilon, \Pi_1, \Pi_2, \ldots, \Pi_n \yields A^*}
{\infer[(\KStar\yields)_n]{{!}^c \Upsilon, \Pi_1, {!}^c \Upsilon, \Pi_2, \ldots, {!}^c \Upsilon, \Pi_n \yields A^*}
{{!}^c \Upsilon, \Pi_1 \yields A & \ldots & {!}^c \Upsilon, \Pi_n \yields A}}
$$
$$\small
\infer=[(\NCONTR)]{{!}^c \Upsilon, \Gamma, \Pi, \Delta \yields B}
{\infer[(\CUT)]{{!}^c \Upsilon, \Gamma, {!}^c \Upsilon, \Pi, \Delta \yields B}{{!}^c \Upsilon, \Pi \yields A & {!}^c \Upsilon, \Gamma, A, \Delta \yields B}}
$$

Notice that translation of $(\yields\BS)$ would have required permutation to move ${!}^c \Upsilon$ to the correct place. Fortunately, we encode only $\mathrm{KA}_\omega$ derivations, which do not involve division operations.

The translation presented above yields a derivation of ${!}^c\Upsilon, \Pi \yields B$ in $!\mathrm{ACT}_\omega$. By Theorem~\ref{Th:cutelim}, this derivation can be made cut-free.

 \fbox{$2 \Rightarrow 3$}
Replace ${!}^c$ with ${!}^s$. The latter obeys all the rules which the former does, so we obtain derivability of
$$
{!}^s (\mathbf{1} \SL {!}^s (V_1 \SL U_1)), {!}^s (V_1 \SL U_1), \ldots, {!}^s (\mathbf{1} \SL {!}^s (V_n \SL U_n)), {!}^s (V_n \SL U_n), \Pi \yields B
$$
Sequents $\yields {!}^s (\mathbf{1} \SL
{!}^s (V_i \SL U_i))$ are derivable using the weakening rule:
$$
\infer[(\yields {!})]{\yields {!}^s (\mathbf{1} \SL {!}^s (V_i \SL U_i))}
{\infer[(\yields\SL)]{\yields \mathbf{1} \SL {!}^s (V_i \SL U_i)}
{\infer[(\mathrm{weak})]{{!}^s (V_i \SL U_i) \yields \mathbf{1}}
{\yields\mathbf{1}}}}
$$
Using cut, we obtain the needed sequent
$$
{!}^s (V_1 \SL U_1), \ldots, {!}^s (V_n \SL U_n), \Pi \yields B.
$$

\fbox{$3 \Rightarrow 1$}
Consider a cut-free proof of ${!}^s (V_1 \SL U_1), \ldots, {!}^s (V_n \SL U_n), \Pi \yields B$ and erase all formulae including $\SL$ in it. (Recall that $U_i$, $V_j$, $\Pi$, and $B$ include only $\mconj$, $\adisj$, and $\KStar$.)
In particular, all ${!}^s$-formulae get erased, and the goal sequent becomes the original $\Pi \yields B$.

After erasing, all rules operating ${!}^s$ trivialize, and applications of $(\SL\yields)$ transform into
$$
\infer{\Gamma, \Pi, \Delta \yields B}
{\Pi \yields U_i & \Gamma, V_i, \Delta \yields B}
$$
Here $V_i \SL U_i$ to the left of $\Pi$ gets hidden. This gets modelled by two cuts:
$$
\infer[(\CUT)]{\Gamma, \Pi, \Delta \yields B}
{\Pi \yields U_i & \infer[(\CUT)]{\Gamma, U_i, \Delta \yields B}
{U_i \yields V_i & \Gamma, V_i, \Delta \yields B}}
$$
Thus, we obtain a derivation of $\Pi \yields B$ in $\mathrm{KA}_\omega + \Ec$.
 \end{proof}

 Now we are almost ready to prove $\Pi_1^1$-hardness of $!\mathrm{ACT}_\omega$ by reduction from derivability in $\mathrm{KA}_\omega$ from finite sets $\Ec$.
 Kozen's result, however, is formulated semantically: it establishes complexity of the universal Horn theory of *-continuous Kleene algebras, that is, the problem of whether a given sequent $\Pi \yields B$ is {\em true} under all interpretations in all *-continuous Kleene algebras in which all sequents in $\Ec$ are true. In order to shift to syntax, namely, derivability in $\mathrm{KA}_\omega$ from $\Ec$, one needs a completeness theorem.

 Let us give the formal definitions and statements.

 \begin{Def}
  A \emph{*-continuous Kleene algebra} is an partially ordered algebraic structure \linebreak $(\mathfrak{A}, {\preccurlyeq}, \mconj, \mathbf{1}, \adisj, \KStar)$ where $(\mathfrak{A}, \adisj, \mconj, \Z, \U)$ is an idempotent semi-ring (idempotency means that $a \adisj a = a$), the partial order is defined as follows: $a \preccurlyeq b$ if and only if $b = a \adisj b$; and
  %
  $a \mconj b^* \mconj c = \sup\{a \mconj b^n \mconj c \mid n \geqslant 0\}$ (for any
  $a,b,c\in\mathfrak{A}$). (Here the supremum is taken w.r.t.\ the $\preccurlyeq$ partial order.)
 \end{Def}

 One can easily see that *-continuous Kleene algebras are exactly the algebraic structures which satisfy the $\mathrm{KA}_\omega$ theory (where $\yields$ stands for $\preccurlyeq$ and commas in left-hand sides of sequents are interpreted as $\mconj$). In particular, the *-continuity condition for Kleene star, $a \mconj b^* \mconj c = \sup\{a \mconj b^n \mconj c \mid n \geqslant 0\}$, corresponds exactly to the $\omega$-rule $(\KStar\yields)_\omega$.

 \begin{Def}
  An \emph{interpretation} of $\mathrm{KA}_\omega$ formulae in a *-continuous Kleene algebra $\mathfrak{A}$ is a function $\alpha$ mapping formulae to elements of $\mathfrak{A}$, which is defined in an arbitrary way on variables and commutes with operations. A sequent  $A_1, \ldots, A_n \yields B$ is \emph{true} under interpretation $\alpha$ if $\alpha(A_1)\cdot\ldots\cdot \alpha(A_n) \preccurlyeq \alpha(B)$; in the special case of $n = 0$, truth of $\yields B$ under $\alpha$ means $\mathbf{1} \preccurlyeq \alpha(B)$.
 \end{Def}

 \begin{Def}
  A sequent $\Pi \yields A$ is \emph{entailed} by $\Ec$ (on *-continuous Kleene algebras) iff it is true under all interpretations on *-continuous Kleene algebras, under which all sequents from $\Ec$ are true.
 \end{Def}

Completeness theorem for $\mathrm{KA}_\omega + \Ec$ for an arbitrary $\Ec$, that is, {\em strong completeness} of $\mathrm{KA}_\omega$, is formulated as follows:

\begin{Th}\label{Th:LT}
 A sequent is derivable in $\mathrm{KA}_\omega + \Ec$ if and only if it is entailed by $\Ec$ on *-continuous Kleene algebras.
\end{Th}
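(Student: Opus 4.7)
The proof splits into two directions. The soundness direction---every sequent derivable in $\mathrm{KA}_\omega + \Ec$ is entailed by $\Ec$---is proved by induction on the definition of derivability: for each rule of $\mathrm{KA}_\omega$, I will verify that truth under a fixed interpretation $\alpha$ in a *-continuous Kleene algebra is preserved. The interesting case is $(\KStar\yields)_\omega$: assuming $\alpha(\Gamma) \mconj \alpha(A)^n \mconj \alpha(\Delta) \preccurlyeq \alpha(C)$ for every $n \in \NN$, *-continuity of the ambient algebra gives $\alpha(\Gamma) \mconj \alpha(A)^* \mconj \alpha(\Delta) = \sup_n \alpha(\Gamma) \mconj \alpha(A)^n \mconj \alpha(\Delta) \preccurlyeq \alpha(C)$. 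The rules $(\yields\KStar)_n$ are sound because $a^n \preccurlyeq a^*$ holds in any *-continuous Kleene algebra, and the sequents in $\Ec$ are true under $\alpha$ by the hypothesis imposed on admissible interpretations.

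For the converse, I will run a Lindenbaum--Tarski construction. Set $A \sim B$ iff both $A \yields B$ and $B \yields A$ are derivable in $\mathrm{KA}_\omega + \Ec$, and let $\mathfrak{A}$ be the quotient of the set of formulas by $\sim$, with operations induced from the connectives and with order $[A] \preccurlyeq [B]$ iff $A \yields B$ is derivable. The plan is: (i) check that $\sim$ is a congruence with respect to $\mconj$, $\adisj$, $\KStar$, $\U$, $\Z$; (ii) verify that $\mathfrak{A}$ is an idempotent semiring with unit $[\U]$ and zero $[\Z]$; (iii) establish *-continuity in $\mathfrak{A}$; (iv) observe that the canonical interpretation $\alpha_0(p_i) := [p_i]$ extends to all formulas by $\alpha_0(A) = [A]$, so that a sequent $A_1, \ldots, A_n \yields B$ is true under $\alpha_0$ iff $A_1 \mconj \ldots \mconj A_n \yields B$ is derivable, which via $(\mconj\yields)$ and $(\yields\mconj)$ is iff $A_1, \ldots, A_n \yields B$ itself is derivable. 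In particular every sequent of $\Ec$ is true under $\alpha_0$, so any sequent entailed by $\Ec$ is true under $\alpha_0$ and hence derivable.

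The main technical step is *-continuity of $\mathfrak{A}$. The inequality $[a] \mconj [b]^n \mconj [c] \preccurlyeq [a] \mconj [b]^* \mconj [c]$ is witnessed by $(\yields\KStar)_n$, combined with $(\mconj\yields)$ and $(\yields\mconj)$ for passing between products and comma-separated sequences. For the supremum property, suppose $[a] \mconj [b]^n \mconj [c] \preccurlyeq [d]$ for every $n$, i.e.\ each $a, b^n, c \yields d$ is derivable; then $(\KStar\yields)_\omega$ immediately yields $a, b^*, c \yields d$, so $[a] \mconj [b]^* \mconj [c] \preccurlyeq [d]$. Congruence of $\sim$ with respect to $\KStar$ is of a similar flavour: given $A \yields B$, applying $(\yields\KStar)_n$ to $n$ copies of $A \yields B$ gives $A^n \yields B^*$ for every $n$, and then $(\KStar\yields)_\omega$ gives $A^* \yields B^*$; the reverse direction is symmetric. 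Beyond these $\omega$-rule manipulations I expect no substantial obstacle; the presence of cut in $\mathrm{KA}_\omega$ keeps transitivity of $\preccurlyeq$ immediate and the remaining congruence and semiring-axiom checks routine.
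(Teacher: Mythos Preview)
Your proposal is correct and follows essentially the same route as the paper: soundness by a routine check that each rule (notably $(\KStar\yields)_\omega$, handled via the *-continuity axiom) preserves truth under any interpretation validating $\Ec$, and completeness via the Lindenbaum--Tarski quotient $\mathcal{F}/\Ec$ with the canonical interpretation $\alpha(A) = [A]_\Ec$. The paper leaves ``$\mathcal{F}/\Ec$ is a *-continuous Kleene algebra'' and the $\KStar$-congruence as routine, while you spell out the key steps (using $(\yields\KStar)_n$ and $(\KStar\yields)_\omega$); this is exactly the intended argument.
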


\begin{proof}
 The ``only if'' part (soundness) is established by a routine check that axioms of $\mathrm{KA}_\omega$ are true under all intepretations and that rules of $\mathrm{KA}_\omega$ are truth-preserving.

 For the ``if'' part (completeness), we apply standard Lindenbaum -- Tarski construction, relativized to $\Ec$. (Kozen uses a factor-algebra of the algebra of regular expressions, $\mathrm{REG}(\Sigma^*) / \Ec$, instead~\cite[Lemma~4.1]{Kozen-2002}.)

 Let $\mathcal{F}$ denote the set of all formulae in the language of $\mathrm{KA}_\omega$. Two formulae, $A$ and $A'$, are equivalent in $\mathrm{KA}_\omega + \Ec$, if both $A \yields A'$ and $A' \yields A$ are derivable in $\mathrm{KA}_\omega$ from $\Ec$. Due to the $(\mathrm{id})$ axiom and the cut rule, this is indeed an equivalence relation. Denote the factor-set (set of equivalence classes) by $\mathcal{F} / \Ec$. The equivalence class of $A$ is denoted by $[A]_\Ec$.

 Next, impose a structure of *-continuous Kleene algebra on $\mathcal{F} / \Ec$:
 \begin{align*}
  & [ A ]_\Ec \mconj [ B ]_\Ec = [ A \mconj B ]_\Ec \\
  & [ A ]_\Ec \adisj [ B ]_\Ec = [ A \adisj B ]_\Ec \\
  & [ A ]_\Ec^* = [ A^* ]_\Ec
 \end{align*}
 The unit is $[\mathbf{1}]_\Ec$. This definition is correct, because our equivalence relation is a congruence w.r.t. Kleene algebra operations: if $A$ is equivalent to $A'$ and $B$ is equivalent to $B'$, then
 so are $A \mconj B$ and $A' \mconj B'$, $A \adisj B$ and $A' \mconj B'$, and $A^*$ and ${A'}^*$ (the last equivalence involves the $\omega$-rule to establish).

 A routine check shows that $\mathcal{F} / \Ec$ is indeed a *-continuous Kleene algebra. The standard interpretation is defined as follows:
 $\alpha(A) = [A]_\Ec$ (by definition, it commutes with operations). All sequents from $\Ec$ are true under this interpretation. Indeed, if $(U \yields V) \in \Ec$, then $V$ is equivalent to $U \adisj V$, whence
 $\alpha(V) = \alpha(U) \adisj \alpha(V)$, that is, $\alpha(U) \preccurlyeq \alpha(V)$.

 Since rules of $\mathrm{KA}_\omega$ are truth-preserving, every sequent derivable from $\Ec$ is also true under interpretation $\alpha$. Moreover, the converse also holds. If $A \yields B$ is true under $\alpha$, then $\alpha(A \adisj B) = \alpha(B)$. Therefore, $A \adisj B$ is equivalent to $B$, in particular, $A \adisj B \yields B$ is derivable from $\Ec$. By cut with $A \yields A \adisj B$ we get derivability of $A \yields B$. For sequents with zero ($\yields B$) or more than one ($A_1, \ldots, A_n \yields B$) formulae in the left-hand side, take $A = \mathbf{1}$ or $A = A_1 \mconj\ldots\mconj A_n$ respectively.

 Thus, $\mathcal{F} / \Ec$ gives a {\em universal model} for $\mathrm{KA}_\omega + \Ec$. Any formula which is true under all interpretations, under which $\Ec$ is true, is in particular true under $\alpha$ in $\mathcal{F} / \Ec$ and therefore derivable from $\Ec$. This finishes the completeness proof.
\end{proof}

Let us recall Kozen's theorem in the formulation we are going to use:
\begin{Th}
 The following problem is $\Pi_1^1$-complete: given a finite $\Ec = \{ U_1 \yields V_1, \ldots, U_n \yields V_n \}$ and $\Pi \yields B$ in the language of $\mathrm{KA}_\omega$, determine whether $\Pi \yields B$ is entailed by $\Ec$ on *-continuous Kleene algebras.
\end{Th}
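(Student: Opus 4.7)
The plan is to establish the $\Pi_1^1$ upper bound using the completeness Theorem~\ref{Th:LT}, and then $\Pi_1^1$-hardness via a reduction from a standard $\Pi_1^1$-complete problem, following Kozen's approach.

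For the upper bound, I would first invoke Theorem~\ref{Th:LT} to replace semantic entailment from $\Ec$ by syntactic derivability in $\mathrm{KA}_\omega + \Ec$. Then, by the reasoning of Proposition~\ref{prop-id-d} and Theorem~\ref{th-com-d-1} adapted to $\mathrm{KA}_\omega + \Ec$ with $\Ec$ finite, the corresponding immediate derivability operator is monotone and definable by an arithmetical formula positive in the set parameter --- the $\omega$-rule $(\KStar\yields)_\omega$ contributes a universal quantifier over $\mathbb{N}$, but keeps the formula within the arithmetic hierarchy. Folklore~\ref{folk-id-2} then yields $\Pi_1^1$-boundedness of the least fixed point, uniformly in $\Ec$ and in the query sequent.

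For hardness, I would reduce from a $\Pi_1^1$-complete problem such as well-foundedness of a computable tree $T \subseteq \mathbb{N}^{<\omega}$, which is equivalent in difficulty to membership in $\mathrm{dom}(\nu_\mathcal{O})$ by Folklore~\ref{folk-o-1}. Fix a Turing machine $M$ computing the children function of $T$, and encode its finite state-transition table as a finite set $\Ec_T$ of Horn sequents over propositional variables for control states together with distinguished letters $a$, $b$, $\ldots$ for tape symbols and head moves. Well-foundedness of $T$ is then captured by an entailment of the shape $p_{q_0} \yields u^*$, where $u$ is the disjunction (via $\adisj$) of all one-step configuration changes of $M$: thanks to $(\KStar\yields)_\omega$, this entailment holds iff every computation branch of $M$ descends finitely many tree levels before halting, i.e., iff $T$ is well-founded.

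The main obstacle is proving faithfulness of the reduction. Soundness --- well-foundedness of $T$ implies the entailment --- follows by transfinite induction along the rank function of $T$, gluing hypotheses at successor ranks via $(\yields\KStar)_n$ and $(\KStar\yields)_\omega$. The completeness direction is the delicate part: from an infinite branch of $T$ one must exhibit a concrete *-continuous Kleene algebra validating $\Ec_T$ but refuting the target sequent. A natural candidate is the Lindenbaum--Tarski quotient used in the proof of Theorem~\ref{Th:LT}, where the infinite branch manifests itself as an infinite descending chain of equivalence classes witnessing non-entailment. The key technical point is designing $\Ec_T$ rigidly enough that distinct $M$-configurations are not accidentally identified modulo $\Ec_T$ (so that the finite axiomatization still uniquely tracks the computation), which is precisely the combinatorial core of the original argument in~\cite{Kozen-2002}.
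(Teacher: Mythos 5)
First, note what the paper actually does here: it does \emph{not} prove this theorem. The statement is quoted as Kozen's result from~\cite{Kozen-2002}, and the only original content the paper supplies is the technical remark that Kozen's \emph{equational} formulation transfers to the \emph{inequational} one used here --- hypotheses $U_i \yields V_i$ correspond to equations $V_i = U_i \adisj V_i$, and in Kozen's hard instances (his Lemma~5.1) the conclusion is already of inequational shape, so it need not be split into two inequations. Your proposal instead tries to reconstruct Kozen's proof from scratch. The upper-bound half is fine in outline and consistent with the paper's machinery: by Theorem~\ref{Th:LT}, entailment coincides with derivability in $\mathrm{KA}_\omega + \Ec$, whose immediate derivability operator is arithmetical and positive uniformly in the finite set $\Ec$, so Folklore~\ref{folk-id-2} gives $\Pi^1_1$-boundedness.

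The hardness half, however, has a genuine gap: the entire content of the theorem lives exactly in the steps you defer. Concretely, (i) the target sequent $p_{q_0} \yields u^*$ is at best underspecified --- as written it makes no reference to a halting or accepting configuration, so nothing forces the chain of one-step simulations to terminate; the conclusion in such encodings must mention a terminal marker (an inequation of roughly the shape ``start $\leqslant u^* \mconj (\text{halt})$''), and arranging this correctly is not a cosmetic detail. (ii) The completeness direction --- exhibiting a *-continuous Kleene algebra that validates $\Ec_T$ but refutes the conclusion whenever $T$ has an infinite branch --- is precisely where Kozen's combinatorial work happens; asserting that ``the infinite branch manifests itself as an infinite descending chain of equivalence classes'' in the Lindenbaum--Tarski quotient is not an argument, because proving that the relevant classes remain \emph{distinct} modulo $\Ec_T$ (your ``rigidity'' requirement) is the whole point, and you explicitly defer it to ``the combinatorial core of the original argument in~\cite{Kozen-2002}.'' As it stands the proposal is therefore a citation of Kozen dressed as a proof; if you intend to cite, the one thing genuinely worth adding is the equational-versus-inequational translation, which is exactly what the paper supplies and your proposal omits.
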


This theorem indeed follows from the reasoning of Kozen's article~\cite{Kozen-2002}, via the following technical remark. Kozen's original formulation uses equations instead of inequations, in other words, $=$ instead of $\yields$. In $\Ec$, this is does not make any change, since $U_i \yields V_i$ can be equivalently replaced with $V_i = U_i \adisj V_i$ (by definition) and, vice versa, instead of $U_i = V_i$ one may consider two hypotheses, $U_i \yields V_i$ and $V_i \yields U_i$. For the conclusion, $\Pi \yields B$, this could be a more important issue, since an equation here would split into {\em two} inequations. Fortunately, in Kozen's construction~\cite[Lemma~5.1]{Kozen-2002} the conclusion is actually an inequation (that is, an equation of the form $V = U \adisj V$), so this issue disappears, and we can state $\Pi_1^1$-hardness in the inequational language as well.

 Using this theorem, Lemma~\ref{Lm:roundrobin}, and Theorem~\ref{Th:LT}, we can establish $\Pi_1^1$-hardness of the derivability problem for ${!}\mathrm{ACT}_\omega$.

\begin{Th} \label{Cne0-hard}
Suppose $\Cc \ne \varnothing$. Then the derivability problem for ${!}\mathrm{ACT}_\omega$ is $\Pi_1^1$-hard.
\end{Th}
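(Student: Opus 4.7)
The plan is to reduce from Kozen's $\Pi_1^1$-hard problem: deciding, given a finite $\Ec = \{U_1 \yields V_1, \dots, U_n \yields V_n\}$ and a sequent $\Pi \yields B$ in the Kleene algebra signature, whether $\Pi \yields B$ is entailed by $\Ec$ on *-continuous Kleene algebras. By Theorem~\ref{Th:LT}, entailment on *-continuous Kleene algebras is the same as derivability in $\mathrm{KA}_\omega + \Ec$, so the $\Pi_1^1$-hardness transfers from the semantic to the syntactic problem.

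Since $\Cc \ne \varnothing$, we may fix once and for all some label $c \in \Cc$. The reduction then sends the input $(\Ec, \Pi \yields B)$ to the sequent
\[
s(\Ec, \Pi \yields B)\ :=\ {!}^c(\U \SL {!}^c(V_1 \SL U_1)), {!}^c(V_1 \SL U_1), \dots, {!}^c(\U \SL {!}^c(V_n \SL U_n)), {!}^c(V_n \SL U_n), \Pi \yields B
\]
of ${!}\mathrm{ACT}_\omega$. This map is manifestly computable: it is a straightforward string-level transformation on the G\"odel-coded input. By equivalence $1 \Leftrightarrow 2$ of Lemma~\ref{Lm:roundrobin}, we have $\mathrm{KA}_\omega + \Ec \yields (\Pi \yields B)$ if and only if ${!}\mathrm{ACT}_\omega \yields s(\Ec, \Pi \yields B)$. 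Composing the two equivalences, the map above is an $m$-reduction from Kozen's problem to the derivability problem of ${!}\mathrm{ACT}_\omega$, establishing $\Pi_1^1$-hardness.

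There is essentially no obstacle beyond what has already been assembled: the semantic-to-syntactic bridge is provided by Theorem~\ref{Th:LT}, while the heavy lifting---internalizing an auxiliary set of Horn axioms into the pure calculus via the subexponential ${!}^c$ with $c \in \Cc$---is exactly Lemma~\ref{Lm:roundrobin}. The only point worth double-checking is that Kozen's $\Pi_1^1$-hardness is indeed available in the inequational formulation that matches our sequent notation; the excerpt already notes that this follows from Kozen's construction \cite[Lemma~5.1]{Kozen-2002}, since the conclusion there is an inequation rather than a true equation. Thus the whole argument is a short composition of the preceding results.
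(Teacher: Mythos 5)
Your proposal is correct and follows exactly the paper's own argument: the same reduction $f(\langle \Pi \yields B, \Ec\rangle)$ to the sequent prefixed with ${!}^c(\U \SL {!}^c(V_i \SL U_i)), {!}^c(V_i \SL U_i)$, composed with Theorem~\ref{Th:LT} (semantic entailment equals derivability in $\mathrm{KA}_\omega + \Ec$) and the equivalence $1 \Leftrightarrow 2$ of Lemma~\ref{Lm:roundrobin}. Even the remark about Kozen's result being available in inequational form matches the paper's discussion.
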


 \begin{proof}
 Let the subexponential signature $\Sigma$ include a label $c$ such that $c \in \Cc$.

  We proceed by the following reduction: for a pair $\langle \Pi \yields B, \Ec \rangle$ of a sequent and a finite set of sequents in the language of Kleene algebra, where $\Ec = \{ U_1 \yields V_1, \ldots, U_n \yields V_n \}$, let
  \begin{multline*}
  f(\langle \Pi \yields B, \Ec \rangle) =\\
  {!}^c (\mathbf{1} \SL {!}^c (V_1 \SL U_1)),
  {!}^c (V_1 \SL U_1), \ldots,
  {!}^c (\mathbf{1} \SL {!}^c (V_n \SL U_n)),
  {!}^c (V_n \SL U_n), \ldots, \Pi \yields B.
  \end{multline*}

  This function $f$ provides the needed $m$-reduction from Kozen's $\Pi_1^1$-complete entailment problem to the derivability problem for ${!}\mathrm{ACT}_\omega$:
  \[
   \begin{aligned}
    &\mbox{$\Pi \yields B$ is entailed by $\Ec$}\\
    &\mbox{on *-continuous Kleene algebras}
   \end{aligned}
   \quad\iff\quad
    f(\langle \Pi \yields B, \Ec \rangle)\mbox{ is derivable in }{!}\mathrm{ACT}_\omega.
  \]
  This equivalence is proved via $\mathrm{KA}_\omega + \Ec$. Namely, the following three statements are equivalent:
  \begin{enumerate}
   \item $\Pi \yields B$ is entailed by $\Ec$ on *-continuous Kleene algebras;
   \item $\Pi \yields B$ is derivable in $\mathrm{KA}_\omega + \Ec$;
   \item $f(\langle \Pi \yields B, \Ec \rangle)$ is derivable in ${!}\mathrm{ACT}_\omega$, provided $c \in \Cc$.
  \end{enumerate}
Here statements 1 and 2 are equivalent by Theorem~\ref{Th:LT} and statements 2 and 3 are equivalent by Lemma~\ref{Lm:roundrobin}. This finishes the proof of $\Pi_1^1$-hardness (and, by Theorem~\ref{th-com-d-1}, $\Pi_1^1$-completeness) of the derivability problem for ${!}\mathrm{ACT}_\omega$.
 \end{proof}

\subsection{Closure ordinal of $\Dcut$}
Next, we compute the closure ordinal of $\Dcut$, provided that $\Cc \ne \varnothing$.

The following result should help us familiarise ourselves with transfinite sequences arising in the study of monotone $\Pi^1_1$- and
$\Sigma^1_1$-operators.

\begin{Prop} \label{prop-id-rec}
Let $F: \mathcal{P} \left( \mathbb{N} \right) \rightarrow \mathcal{P} \left( \mathbb{N} \right)$ be a monotone $\Pi^1_1$-operator. Then there exists a
total computable function from $\mathbb{N}$ to $\mathbb{N}$ that, given any $n \in \mathcal{O}$, returns the G\"{o}del number of~a~$\Pi^1_1$-for\-mula
defining $F^{\nu_{\mathcal{O}} \left( n \right)} \left( \varnothing \right)$ in $\mathfrak{N}$. Similarly with $\Sigma^1_1$ in place of $\Pi^1_1$.
\end{Prop}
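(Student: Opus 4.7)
The plan is to build the required function $g$ by effective transfinite recursion along $\mathcal{O}$, with the recursion theorem used to legitimise the self-referential character of the limit clause. At each stage $n \in \mathcal{O}$, $g(n)$ will return the G\"odel number of a $\Pi^1_1$-formula $\Psi_n(x)$ defining $F^{\nu_\mathcal{O}(n)} \left( \varnothing \right)$, and the three clauses of the recipe mirror the three clauses in the construction of $\mathcal{O}$.

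First I would fix a $\Pi^1_1$-formula $\Phi(x, X)$ \emph{positive in $X$} with $F = [\Phi]$. When $F = \Dcut$ such a formula is delivered by Proposition~\ref{prop-id-d}; in the general monotone case one invokes the standard normal-form result that every monotone $\Pi^1_1$-operator admits a positive $\Pi^1_1$-representation. The clauses are then: for $n = 1$, set $\Psi_n := (x \ne x)$, which defines $\varnothing = F^0(\varnothing)$; for $n = 2^k$, obtain $\Psi_n$ by substituting the formula coded by $g(k)$ for every atomic subformula $t \in X$ of $\Phi(x, X)$, renaming bound set variables and prenexing; for $n = 3 \cdot 5^k$, using Folklore~\ref{folk-msoa-2}, put $\Psi_n$ equal to $\exists i\, \Pi^1_1\text{-}\mathrm{SAT}\bigl(x, g(\ae_k(i))\bigr)$, where the occurrence of $g$ inside the formula is legitimate because $g \circ \ae_k$ is computable (via the recursion theorem), hence representable by a term in our signature.

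The verification splits into two tasks: (i) checking that every $\Psi_n$ produced this way is genuinely $\Pi^1_1$, and (ii) checking that it defines $F^{\nu_\mathcal{O}(n)}(\varnothing)$. For (i) in the successor case, positivity of $\Phi$ in $X$ ensures that the universal set quantifiers introduced by substituting $\Psi_k = \forall Z\, \Psi^0_k$ sit under an even number of negations, so they can be prenexed outward (with variable renaming) to yield a $\Pi^1_1$-formula. For (i) in the limit case we use closure of $\Pi^1_1$ under existential number quantification, which follows by dualising the standard set-coding argument showing that $\Sigma^1_1$ is closed under universal number quantification. Part (ii) is a straightforward induction on $\mathcal{O}$-notations, using $F^{\beta+1}(\varnothing) = F\bigl(F^{\beta}(\varnothing)\bigr)$ at successors, $F^{\nu_\mathcal{O}(3 \cdot 5^k)}(\varnothing) = \bigcup_i F^{\nu_\mathcal{O}(\ae_k(i))}(\varnothing)$ at limits, and the defining property of $\Pi^1_1\text{-}\mathrm{SAT}$.

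The main obstacle is the substitution argument in the successor clause: everything hinges on the fact that substituting a $\Pi^1_1$-formula into the positive occurrences of $X$ inside a $\Pi^1_1$-formula yields another $\Pi^1_1$-formula. This is where positivity of the representation $\Phi$ plays its decisive role; without it, negations in front of $t \in X$ would turn the inserted universal set quantifiers into existential ones nested inside $\forall X$-scopes, and the $\Pi^1_1$ form would collapse. The $\Sigma^1_1$ variant is obtained by word-for-word dualisation: fix a positive $\Sigma^1_1$-representation of $F$, replace $\Pi^1_1\text{-}\mathrm{SAT}$ by its $\Sigma^1_1$-analogue, and appeal in the limit clause to closure of $\Sigma^1_1$ under existential number quantification.
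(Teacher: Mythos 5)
Your proposal is correct and follows essentially the same route as the paper's proof: effective transfinite recursion along $\mathcal{O}$ legitimised by the recursion theorem, with the three clauses mirroring the construction of $\mathcal{O}$ and the $\Pi^1_1 \text{-} \mathrm{SAT}$ predicate used to take the union at limit notations. The only differences are local implementation details: at successors the paper sidesteps your positive-substitution lemma by writing ${\forall Y}\, \left( {\exists y}\, \left( {\Pi^1_1 \text{-} \mathrm{SAT}} \left( y, \underline{n} \right) \wedge y \notin Y \right) \vee \Phi \left( x, Y \right) \right)$ and invoking monotonicity, and at limits it refers to $g \circ \ae_k$ through an arithmetical formula for the graph of $\ae$ rather than as a term of the signature --- a formulation you should adopt as well, since $\ae_k$ need not be total for arbitrary $k$, so your phrase \emph{representable by a term} ought to read \emph{with arithmetically definable graph}.
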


\begin{proof}
Let $F$ be as described. In particular, $F = \left[ \Phi \right]$ for some $\Pi^1_1$-formula $\Phi \left( x, X \right)$. Notice that since $F$ is
monotone, it is also expressible by
\[
{\Phi' \left( x, X \right)}\ :=\
{\forall Y}\, {\left(
{\exists y}\, {\left( y \in X \wedge y \not \in Y \right)} \vee
\Phi \left( x, Y \right)
\right)} ,
\]
viz.\ $F = \left[ \Phi' \right]$. Moreover, $\Phi'$ can easily be reduced to a $\Pi^1_1$-formula positive in $X$. Thus, without loss of generality,
we may assume that $\Phi$ is positive in $X$. Next, we remark that the graph of $\ae$ (our universal partial computable function), being a computably
enumerable set, is definable in $\mathfrak{N}$ by some arithmetical formula $\Theta \left( x, y, z \right)$. For every $n \in \mathbb{N}$, let
\begin{align*}
{\Phi_{n} \left( x \right)}\ &:=\
{\forall Y}\, {\left(
{\exists y}\, {\left( {\Pi^1_1 \text{-} \mathrm{SAT}} \left( y, \underline{n} \right) \wedge y \not \in Y \right)} \vee
\Phi \left( x, Y \right)
\right)} ,\\
{\Phi_{n}^{\star} \left( x \right)}\ &:=\
{\exists y}\, {\exists z}\, {\left(
\Theta \left( \underline{n}, y, z \right) \wedge
{\Pi^1_1 \text{-} \mathrm{SAT}} \left( x, z \right)
\right)}
\end{align*}
with $\Pi^1_1 \text{-} \mathrm{SAT}$ as in Folklore~\ref{folk-msoa-2} and $\underline{n}$ denoting the numeral for $n$. It is not hard to reduce each
of $\Phi_{n} \left( x \right)$ and $\Phi_{n}^{\star} \left( x \right)$ to a $\Pi^1_1$-form. Further, we make the following observations.
\begin{enumerate} \renewcommand{\theenumi}{\roman{enumi}}

\item Suppose $n = \# \Psi \left( x \right)$ with $\Psi \left( x \right)$ a $\Pi^1_1$-formula, and let $P$ be the set defined in $\mathfrak{N}$ by
$\Psi \left( x \right)$. Then $\Phi_{n} \left( x \right)$ defines $F \left[ P \right]$ in $\mathfrak{N}$.

\item Suppose that $\ae_n \left[ \mathbb{N} \right] = \left\{ {\# \Psi_0 \left( x \right)}, {\# \Psi_1 \left( x \right)}, \dots \right\}$ where
$\Psi_0 \left( x \right)$, $\Psi_1 \left( x \right)$, \dots\ are $\Pi^1_1$-for\-mu\-las,  and for any $k \in \mathbb{N}$, let $P_k$ be the set defined
in $\mathfrak{N}$ by $\Psi_k \left( x \right)$. Then $\Phi_{n}^{\star} \left( x \right)$ defines $\bigcup_{k \in \mathbb{N}} P_k$ in $\mathfrak{N}$.

\end{enumerate}
Accordingly we have computable functions $f$ and $f^{\star}$ such that for every $n \in \mathbb{N}$:
\begin{enumerate} \renewcommand{\theenumi}{\Roman{enumi}}

\item if $n = \# \Psi \left( x \right)$ with $\Psi \left( x \right)$ a $\Pi^1_1$-formula, then $f \left( n \right) = \# \Omega \left( x \right)$ with
$\Omega \left( x \right)$ a $\Pi^1_1$-for\-mu\-la that is equivalent to $\Phi_{n} \left( x \right)$;

\item if $\ae_n \left[ \mathbb{N} \right] = \left\{ {\# \Psi_0 \left( x \right)}, {\# \Psi_1 \left( x \right)}, \dots \right\}$ where $\Psi_0
\left( x \right)$, $\Psi_1 \left( x \right)$, \dots\ are $\Pi^1_1$-for\-mu\-las, then $f^{\star} \left( n \right) \linebreak
= \# \Omega \left( x \right)$ where $\Omega \left( x \right)$ is equivalent to $\Phi_{n}^{\star} \left( x \right)$.

\end{enumerate}
Now take a total computable two-place function $g$ such that for any $e, k, n \in \mathbb{N}$,
\[
{\ae_{g \left( e, k \right)} \left( n \right)}\ =\ {\ae_e \left( \ae_k \left( n \right) \right)}
\]
(provided by the $s$-$m$-$n$ theorem), and let $h$ be a total computable func\-tion which satisfies
\[
{\ae_{h \left( e \right)} \left( n \right)}\ =\
\begin{cases}
{f \left( \ae_e \left( k \right) \right)} &\text{if} ~\, n = 2^k \ne 1 ,\\
{f^{\star} \left( g \left( e, k \right) \right)} &\text{if} ~\, n = 3 \times 5^k ,\\
\# {{\forall X}\, {\left( x \ne x \right)}} &\text{otherwise}\\
\end{cases}
\]
for all $e, n \in \mathbb{N}$. By the recursion theorem, $\ae_{h \left( c \right)} = \ae_c$ for some $c \in \mathbb{N}$. As can easily be verified,
the function $\ae_c$ does the job.\footnote{In
particular, $\ae_c$ must be total. For otherwise let $n$ be the least element of $\mathbb{N} \setminus \mathrm{dom} \left( \ae_c \right)$. Then
$n = 2^k$ with $k \ne 0$, and therefore $\ae_c \left( k \right)$ is undefined, which contradicts the choice of $n$.}

A similar argument works for $\Sigma^1_1$.
\end{proof}

This result also provides a useful tool for calculating closure ordinals:

\begin{Prop} \label{prop-co-hyp}
Let $F: \mathcal{P} \left( \mathbb{N} \right) \rightarrow \mathcal{P} \left( \mathbb{N} \right)$ be a monotone hyperarithmetical operator whose least
fixed point is~not hyperarithmetical. Then the closure ordinal of $F$ is $\omega_1^{\mathrm{CK}}$.
\end{Prop}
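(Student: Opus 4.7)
The plan is to use Proposition~\ref{prop-id-rec} in both its $\Pi^1_1$ and $\Sigma^1_1$ incarnations to show that every stage $F^{\alpha}(\varnothing)$ with $\alpha$ a constructive ordinal is hyperarithmetical, and then combine this with the hypothesis on the least fixed point with Folklore~\ref{folk-id-2}(b).

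First, I would observe that since $F$ is hyperarithmetical, it is simultaneously a monotone $\Pi^1_1$-operator and a monotone $\Sigma^1_1$-operator. Applying Proposition~\ref{prop-id-rec} to both presentations, there exist total computable functions $h_{\Pi}$ and $h_{\Sigma}$ such that for every $n \in \mathcal{O}$, $h_{\Pi}(n)$ is the G\"odel number of a $\Pi^1_1$-formula defining $F^{\nu_{\mathcal{O}}(n)}(\varnothing)$ in $\mathfrak{N}$, while $h_{\Sigma}(n)$ is the G\"odel number of a $\Sigma^1_1$-formula defining the same set. Consequently, for every constructive ordinal $\alpha$, the stage $F^{\alpha}(\varnothing)$ is both $\Pi^1_1$-bounded and $\Sigma^1_1$-bounded, hence hyperarithmetical.

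Next, let $\gamma$ denote the closure ordinal of $F$. By Folklore~\ref{folk-id-2}(b), $\gamma \leqslant \omega_1^{\mathrm{CK}}$. Suppose, for contradiction, that $\gamma < \omega_1^{\mathrm{CK}}$. Then $\gamma \in \mathsf{C}\text{-}\mathsf{Ord}$, so $\gamma = \nu_{\mathcal{O}}(n)$ for some $n \in \mathcal{O}$. But then the least fixed point of $F$, namely $F^{\gamma}(\varnothing)$, would be hyperarithmetical by the preceding paragraph, contradicting the assumption. Therefore $\gamma = \omega_1^{\mathrm{CK}}$.

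The argument has essentially no hard steps once Proposition~\ref{prop-id-rec} is in place; the only subtle point is that one must invoke it in both dual forms and therefore rely on the full hyperarithmetical assumption on $F$, not merely on its being $\Pi^1_1$. A purely $\Pi^1_1$ operator need not have hyperarithmetical stages, so the symmetric use of the $\Sigma^1_1$ version is what makes the contradiction go through.
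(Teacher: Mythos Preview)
Your proof is correct and follows essentially the same route as the paper: use Folklore~\ref{folk-id-2}(b) for the upper bound $\gamma \leqslant \omega_1^{\mathrm{CK}}$, then apply Proposition~\ref{prop-id-rec} in both its $\Pi^1_1$ and $\Sigma^1_1$ forms to conclude that any constructive-ordinal stage is hyperarithmetical, yielding a contradiction if $\gamma < \omega_1^{\mathrm{CK}}$. Your closing remark about why the full hyperarithmetical hypothesis (and not just $\Pi^1_1$) is needed is a nice addition not made explicit in the paper.
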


\begin{proof}
Let $F$ be as described, and take $\alpha$ to be its closure ordinal, which exists by Folklore~\ref{folk-id-1}. Clearly $\alpha \leqslant
\omega_1^{\mathrm{CK}}$ by Folklore~\ref{folk-id-2}. Now suppose that $\alpha < \omega_1^{\mathrm{CK}}$, so $\alpha \in \mathsf{C} \text{-}
\mathsf{Ord}$ (i.e.\ $\alpha = \nu_{\mathcal{O}} \left( n \right)$ for some $n \in \mathcal{O}$). Since $F$ is both a $\Pi^1_1$-operator and
a $\Sigma^1_1$-operator, $F^{\alpha} \left( \varnothing \right)$ must be hy\-per\-a\-rith\-me\-ti\-cal by Proposition~\ref{prop-id-rec}. At
the same time, $F^{\alpha} \left( \varnothing \right)$ is the the least fixed point of $F$~--- and thus we get a contradiction. Consequently
$\alpha = \omega_1^{\mathrm{CK}}$.
\end{proof}

This quickly leads to:

\begin{Th} \label{th-com-d-2}
Suppose $\Cc \ne \varnothing$. Then the closure ordinal of $\Dcut$ is $\omega_1^{\mathrm{CK}}$. The same holds for $\Dnocut$.
\end{Th}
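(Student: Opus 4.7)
The proof reduces to a direct application of Proposition~\ref{prop-co-hyp}, so the plan is to verify its two hypotheses for $\Dcut$ (and then for $\Dnocut$).

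First, I would check that $\Dcut$ is a monotone hyperarithmetical operator. Monotonicity is immediate from the definition: adding more sequents to $S$ can only enlarge the set of sequents obtainable by a single rule application. Hyperarithmeticity is even stronger than needed: by Proposition~\ref{prop-id-d}, $\Dcut$ is arithmetical, i.e.\ definable by an $\mathcal{L}_2$-formula with no set quantifiers. Every arithmetical operator is trivially both $\Pi^1_1$ and $\Sigma^1_1$, hence $\Delta^1_1$.

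Next, I would argue that the least fixed point of $\Dcut$ is not hyperarithmetical. By construction, this least fixed point coincides with the set of sequents derivable in $\eACTomega$ (as recorded at the end of Section~\ref{sec-prel}). Under the assumption $\Cc \ne \varnothing$, Theorems~\ref{th-com-d-1} and~\ref{Cne0-hard} show this set is $\Pi^1_1$-complete. But a hyperarithmetical set cannot be $\Pi^1_1$-hard, because otherwise it would be $\Delta^1_1$-complete, contradicting Folklore~\ref{folk-msoa-3}. Hence the least fixed point is not hyperarithmetical, and Proposition~\ref{prop-co-hyp} gives that the closure ordinal of $\Dcut$ is $\omega_1^{\mathrm{CK}}$.

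For the statement about $\Dnocut$, the same plan works verbatim, with two small observations. First, $\Dnocut$ is also monotone and arithmetical: one obtains a defining arithmetical formula positive in $X$ by the argument of Proposition~\ref{prop-id-d}, simply omitting the disjunct corresponding to the cut rule. Second, by the cut elimination result (Theorem~\ref{Th:cutelim}), the least fixed point of $\Dnocut$ coincides with that of $\Dcut$, namely the full set of sequents derivable in $\eACTomega$; hence it is again $\Pi^1_1$-complete and, in particular, not hyperarithmetical. Applying Proposition~\ref{prop-co-hyp} once more yields closure ordinal $\omega_1^{\mathrm{CK}}$ for $\Dnocut$ as well.

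There is essentially no obstacle left: all the heavy lifting has been done in Section~\ref{subsec-id} (where the upper bound $\omega_1^{\mathrm{CK}}$ and Proposition~\ref{prop-co-hyp} are established), in Section~\ref{sec-cutelim} (cut elimination, used to identify the least fixed point of $\Dnocut$), and in the $\Pi^1_1$-hardness reduction based on Kozen's theorem. The only mild subtlety worth flagging is to make explicit that the upper bound $\leqslant \omega_1^{\mathrm{CK}}$ comes from Folklore~\ref{folk-id-2}(b), while the matching lower bound is precisely what Proposition~\ref{prop-co-hyp} extracts from the non-hyperarithmeticity of the fixed point — so that the closure ordinal lands exactly on $\omega_1^{\mathrm{CK}}$, not strictly below it.
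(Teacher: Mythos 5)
Your proposal is correct and follows essentially the same route as the paper: establish that $\Dcut$ is a monotone hyperarithmetical (indeed arithmetical) operator via Proposition~\ref{prop-id-d}, deduce that its least fixed point is not hyperarithmetical from $\Pi^1_1$-hardness (Theorem~\ref{Cne0-hard}) together with Folklore~\ref{folk-msoa-3}, and conclude by Proposition~\ref{prop-co-hyp}; your explicit treatment of $\Dnocut$ via cut elimination just spells out what the paper compresses into ``the same argument works for $\Dnocut$.''
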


\begin{proof}
Clearly $\Dcut$ is a hyperarithmetical operator by Proposition~\ref{prop-id-d}, and the least fixed point of $\Dcut$
is~not hyperarithmetical by Theorem~\ref{Cne0-hard} and Folklore~\ref{folk-msoa-3}. So the result follows by Proposition~\ref{prop-co-hyp}.
The same argument works for $\Dnocut$.
\end{proof}



\subsection{The fragment without contraction}

The case of $\Cc = \varnothing$, that is, no subexponential allows contraction, is significantly different from $\Cc \ne \varnothing$. Namely, now we have no opportunity to encode entailment from finite sets of sequents, so the only lower bound we still have is $\Pi_1^0$-hardness (which holds already for $\ACTomega$ without subexponentials~\cite{Buszkowski-2007}).  We shall show that this bound is tight, that is, that the derivability problem for $! \ACTomega$ with $\Cc = \varnothing$ is also $\Pi^0_1$-bounded.\footnote{For
more information on $\Sigma^0_1$- and $\Pi^0_1$-sets one may consult \cite{Rogers-1967} and \cite{HajekPudlak-1993}, for example. In particular, it should be remarked that there exists a $\Pi^0_1$-formula $\Pi^0_1 \text{-} \mathrm{SAT} \left( x, y \right)$ such that for any $\Pi^0_1$-formula $\Phi \left( x \right)$,
\[
\mathfrak{N} \vDash
{\forall x}\,
{\left(
\Pi^0_1  \text{-} \mathrm{SAT} \left( x, {\# \Phi} \right) \leftrightarrow \Phi \left( x \right)
\right)} ,
\]
and similarly with $\Sigma^0_1$ in place of $\Pi^0_1$~--- cf.\ Section 1(d) of Chapter~I in \cite{HajekPudlak-1993}.}

In principle, the $\Pi^0_1$-boundedness can be shown by extending the corresponding techniques used for $\ACTomega$: Palka's *-elimination~\cite{Palka-2007} or a calculus with non-well-founded proofs by Das and Pous~\cite{DasPous-2018}. However, we shall develop a new method for proving $\Pi^0_1$-boundedness, which will be more independent from concrete structural properties of the proof system.

We start with proving an upper bound for the closure ordinal of the fragment $\eACTomega$ with $\Cc = \varnothing$.\footnote{Palka~\cite{Palka-2007} claims, without a proof, an $\omega_1$ upper bound for this ordinal for $\ACTomega$, by saying that $\Snocut_{\omega_1}$ is the set of all derivable sequents. We give a better upper bound, which is useful for reasoning about complexity of derivability.} Following~\cite{Palka-2007}, we define a complexity parameter on formulae and sequents. The values of this complexity parameter will be not natural numbers, but rather elements of a countable well-founded ordered set. Let $\Nc$ be the set of all infinite sequences of natural numbers which eventually stabilize at zero:
$$
\Nc = \{ (m_0, m_1, m_2, \ldots, m_n, \ldots) \mid \exists i_0 \, \forall i \geqslant i_0 \ m_i = 0 \}.
$$
On $\Nc$, let us define the anti-lexicographical order and two operations, pointwise sum and lifting:
\begin{align*}
& (m_0,m_1,\ldots) \prec (n_0,n_1,\ldots) \iff \exists j_0 \, (m_{j_0} < n_{j_0} \mbox{ and } \forall j > j_0\ m_j = n_j); \\
& (m_0,m_1,\ldots) \ncplus (n_0,n_1,\ldots) = (m_0 + n_0, m_1 + n_1, \ldots);\\
& (m_0,m_1,\ldots) {\uparrow} = (0,m_0,m_1,\ldots).
\end{align*}
Let us also define $\iota = (1,0,0,\ldots)$ as the `unit' in $\Nc$.

It is easy to see that $(\Nc, \prec)$ is a well-founded linearly ordered set, and its order type is $\omega^\omega$. Indeed, this is established by the following isomorphism $\nu \colon \Nc \to \omega^\omega$:
\[
{\nu (m_0, m_1, m_2, \ldots, m_n, \ldots)}\ :=\
\ldots + \omega^n \cdot m_n + \ldots + \omega^2 \cdot m_2 + \omega \cdot m_1 + m_0 .
\]
(The number of summands on the right-hand side is always finite, since sequences in $\Nc$ stabilize at zero.)

Now let us define the complexity measure $\eta(\cdot)$ on formulae and sequents, with values in $\Nc$:
\begin{align*}
& \eta(p_i) = \iota \mbox{\quad for each variable $p_i$;}\\
& \eta(\U) = \iota;\\
& \eta(A \BS B) = \eta(B \SL A) = \eta(A \mconj B) = \eta(A \aconj B) = \eta(A \adisj B) = \eta(A) \ncplus \eta(B) \ncplus \iota;\\
& \eta({!}^s A) = \eta(A) \ncplus \iota \mbox{\quad for each $s \in \Ic$;}\\
& \eta(A^*) = (\eta(A) {\uparrow}) \ncplus \iota,
\end{align*}
and for a sequent $s = A_1, \ldots, A_n \yields B$ let $\eta(s) = \eta(A_1) \ncplus \ldots \ncplus \eta(A_n) \ncplus \eta(B)$.

In the absence of contraction ($\Cc = \varnothing$), each of the rules of $\eACTomega$ enjoys the following property: if $s$ is the conclusion and $s'$ is one of the premises, then $\eta(s') \preccurlyeq \eta(s)$. Moreover, these inequalities are {\em strict} for all rules, except permutations $(\PERM)_1$ and $(\PERM)_2$, which do not change complexity.

In order to overcome the issue with permutations, let us consider {\em generalized rules.} A generalized rule application consists of a application of a rule which is not a permutation, with an arbitrary number of $(\PERM)_1$ and $(\PERM)_2$ applications below. Replacing rules with their generalized versions and removing permutation rules yields a system equivalent to $\eACTomega$.


Generalized rules keep the good properties of the original rules: correctness of their applications is decidable and, for a given sequent $s$, there is a finite choice of generalized rules applications which could be the lowermost (immediate) ones in a derivation of $s$.

For each generalized rule, if $s$ is its conclusion and $s'$ is one of its premises, we have $\eta(s') \prec \eta(s)$.

Now we are ready to prove an upper bound on the closure ordinal for $\eACTomega$ with $\Cc = \varnothing$. Recall that $\Dcut$ and $\Dnocut$ denote the immediate derivability operator for $\eACTomega$, with and without the cut rule respectively.

\begin{Th}
Suppose $\Cc = \varnothing$. Then the closure ordinal of $\Dnocut$ is less than or equal to $\omega^\omega$; and the same holds for $\Dcut$.
\end{Th}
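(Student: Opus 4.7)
The plan is to stratify cut-free derivability by the complexity measure $\eta$ and show that every derivable sequent is already reached within $\omega^\omega$-many iterations of $\Dnocut$. A direct induction on $\nu(\eta(s))$ using the original rules is immediately derailed by the permutation rules, which leave $\eta$ unchanged, so I would first pass to the generalised rules introduced just above~--- a main rule followed by a finite block of permutations~--- for which every premise strictly decreases $\eta$.

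Let $\Dnocut^{\mathrm{gen}}$ denote the derivability operator built from the generalised rules, and set $\Snocut^{\mathrm{gen}}_\alpha := (\Dnocut^{\mathrm{gen}})^{\alpha}(\varnothing)$. By transfinite induction on $\nu(\eta(s)) < \omega^\omega$ I would prove that every cut-free derivable sequent $s$ belongs to $\Snocut^{\mathrm{gen}}_{\nu(\eta(s))+1}$. For the inductive step, if $s$ is the conclusion of a generalised rule with premises $s_1, s_2, \ldots$ (possibly countably many, as for $(\KStar\yields)_\omega$), then $\nu(\eta(s_i)) < \nu(\eta(s))$ for every $i$, so $\nu(\eta(s_i))+1 \leqslant \nu(\eta(s))$; by the induction hypothesis each $s_i$ lies in $\Snocut^{\mathrm{gen}}_{\nu(\eta(s))}$, hence $s \in \Snocut^{\mathrm{gen}}_{\nu(\eta(s))+1}$. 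Consequently the set of cut-free derivable sequents is contained in $\Snocut^{\mathrm{gen}}_{\omega^\omega}$.

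To return to the original $\Dnocut$, I would observe that a single generalised rule step consists of one main-rule step followed by a finite tail of permutation steps, so a routine induction on $\alpha$ gives $\Snocut^{\mathrm{gen}}_\alpha \subseteq \Snocut_{\omega \cdot \alpha}$. Setting $\alpha = \omega^\omega$ and using $\omega \cdot \omega^\omega = \omega^\omega$, the set of cut-free derivable sequents is contained in $\Snocut_{\omega^\omega}$; the reverse inclusion is immediate, since each stage contains only cut-free derivable sequents. Thus $\Snocut_{\omega^\omega}$ is the least fixed point of $\Dnocut$, and its closure ordinal is at most $\omega^\omega$. For $\Dcut$ the same bound follows from Theorem~\ref{Th:cutelim}: the two operators share the same set of derivable sequents, and the pointwise inclusion $\Dnocut(S) \subseteq \Dcut(S)$ lifts by induction to $\Snocut_\alpha \subseteq \Scut_\alpha$, so $\Scut_{\omega^\omega}$ contains every derivable sequent while itself consisting only of derivable sequents~--- hence it is the least fixed point of $\Dcut$.

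The main obstacle, as already flagged, is the mismatch between $\eta$ and the permutation rules; the generalised-rule detour is what resolves it, at the cost of the slightly looser bound $\omega \cdot \alpha$ in the passage from $\Snocut^{\mathrm{gen}}$ to $\Snocut$~--- a loss that is harmless because $\omega^\omega$ is absorbent under multiplication by $\omega$ on the left. A secondary point to verify carefully is that the $\omega$-rule really does produce premises with strictly smaller $\eta$-complexity; this reduces to the simple observation that $\nu$ sends the shift $(\cdot){\uparrow}$ to multiplication by $\omega$ on the left, so $\nu(n \cdot \eta(A)) < \omega \cdot \nu(\eta(A)) \leqslant \nu(\eta(A^*))$ for every $n$.
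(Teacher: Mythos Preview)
Your proposal is correct and follows essentially the same approach as the paper: both rely on the passage to generalised rules so that $\eta$ strictly decreases at every step, and both absorb the finite tail of permutations by multiplying the bound by $\omega$ on the left, exploiting $\omega \cdot \omega^\omega = \omega^\omega$. The only organisational difference is that you factor the argument through an explicit auxiliary operator $\Dnocut^{\mathrm{gen}}$ and prove two separate inductions ($s \in \Snocut^{\mathrm{gen}}_{\nu(\eta(s))+1}$, then $\Snocut^{\mathrm{gen}}_\alpha \subseteq \Snocut_{\omega\cdot\alpha}$), whereas the paper folds these into a single induction showing directly that $s \in \Snocut_{\omega\cdot\nu(\eta(s))}$, using along the way that $\nu(\eta(s))$ is always a successor ordinal; your bound is marginally looser but equally sufficient.
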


\begin{proof}
We shall prove the following statement: if sequent $s$ is derivable, then it belongs to $\Snocut_\alpha = \Dnocut^\alpha(\varnothing)$, where $\alpha =  \omega \cdot \nu(\eta(s))$. Notice that here the $\Dnocut$ operator is defined w.r.t.\ the original formulation of the calculus, not the one with generalized rules. Cut is also disallowed: by Theorem~\ref{Th:cutelim}, any derivable sequent is cut-free derivable.

This statement is proved by transfinite induction on $\nu(\eta(s))$. Notice that, by definition of $\eta$, this ordinal is always a successor ($m_0 \ne 0$). Let $\nu(\eta(s)) = \beta + 1$. Consider a derivation of $s$ which uses generalized rules. For each premise $s'$ of the lowermost generalized rule in this derivation, we have $\eta(s') \prec \eta(s)$, thus, $\nu(\eta(s')) \leqslant \beta$. By induction hypothesis, each $s'$ belongs to $\Snocut_{\omega\cdot\nu(\eta(s'))} \subseteq \Snocut_{\omega \cdot \beta}$.

Now let us look inside this generalized rule: it consists of an application of a non-permutation rule, which derives a sequent $\tilde{s}$ from the premises, followed by a finite number $k$ of permutations, which yield the goal sequent $s$. Thus, $\tilde{s} \in \Snocut_{\omega \cdot \beta + 1}$ and $s \in \Snocut_{\omega \cdot \beta + 1+k} \subseteq \Snocut_{\omega \cdot \beta + \omega} = \Snocut_{\omega \cdot (\beta +1)}$, q.e.d.

For any sequent $s$ we have $\nu(\eta(s)) < \omega^\omega$, whence $\alpha = \omega \cdot \nu(\eta(s)) < \omega \cdot \omega^\omega = \omega^\omega$. Therefore, each derivable sequent belongs already to $\Snocut_{\omega^\omega}$, which makes $\omega^\omega$ an upper bound for the closure ordinal.

The result for $\Dcut$ follows easily, since we have $\Scut_\alpha  \supseteq \Snocut_\alpha$ for any $\alpha$.
\end{proof}

Next, to derive the desired complexity results, it is helpful to extend the G\"{o}del numbering $\sharp$ for $\mathrm{Seq}$ in a suitable way.
Let $\mathrm{Seq}^{\ast}$ be the collection of all finite $\mathrm{Seq}$-sequences.
Evidently, we can use $\sharp$ to define an effective numbering $\sharp_{\ast}$ for $\mathrm{Seq}^{\ast}$, so that
\[
E_{\ast}\ :=\
{\left\{
\left( {\sharp s}, {\sharp_{\ast} \left( s_1, \ldots, s_n \right)} \right) \mid
s = s_i ~ \text{for some} ~ i \in \left\{ 1, \ldots, n \right\}
\right\}}
\]
is a computable relation. Further, take $\mathrm{Seq}^{\bullet}$ to be the collection of all infinite $\mathrm{Seq}$-sequences of the form
\[
{\Gamma, A^{0}, \Delta \vdash B} , \quad
{\Gamma, A^{1}, \Delta \vdash B} , \quad
{\Gamma, A^{2}, \Delta \vdash B} , \quad \ldots
\]
--- i.e., all those that may appear above the line in the $\omega$-rule. Since the elements of $\mathrm{Seq}^{\bullet}$ have a  simple form, we shall also assume an effective numbering $\sharp_{\bullet}$ for $\mathrm{Seq}^{\bullet}$, so that
\[
E_{\bullet}\ :=\
{\left\{
\left( {\sharp s}, {\sharp_{\bullet} \left( s_0, s_1, \ldots\; \right)} \right) \mid
s = s_n ~ \text{for some} ~ n \in \mathbb{N}
\right\}}
\]
is computable. Now consider
\begin{multline*}
R_{\ast}\ :=\
{\left\{
\left( {\sharp s}, {\sharp_{\ast} \left( s_1, \dots, s_n \right)} \right) \mid
\begin{tabular}{c}
$s$ can be obtained from $s_1$, \ldots, $s_n$ by one\\
application of a generalized finitary rule
\end{tabular}
\right\}} \quad \text{and}\\[1em]
R_{\bullet}\ :=\
{\left\{
\left( {\sharp s}, {\sharp_{\bullet} \left( s_0, s_1, \ldots\; \right)} \right) \mid
\begin{tabular}{c}
$s$ can be obtained from $s_0$, $s_1$, \ldots\ by one\\
application of the generalized $\omega$-rule
\end{tabular}
\right\}} .
\end{multline*}
Clearly, both $R_{\ast}$ and $R_{\bullet}$ are computable. Also, since axioms are nullary rules, we have
\[
{\left\{ \left( {\sharp s}, {\sharp_{\ast} \left( {~} \right)} \right) \mid s ~ \text{is an axiom} \right\}}\
\subseteq\
R_{\ast}
\]
where $\left( {~} \right)$ denotes the empty sequence of sequents. Finally, to simplify things slightly, we shall occasionally identify elements of $\mathcal{N}$
with those of $\mathbb{N}$.\footnote{Formally,
one might have assumed some effective one-one numbering from $\mathcal{N}$ onto $\mathbb{N}$ (which turns $\prec$ into a computable binary relation).}

The fact that $\eta$ gives us a nice ranking for $\mathrm{Seq}$ in the case when $\mathcal{C}$ is empty can be used to obtain the following result, which is partially similar to Proposition~\ref{prop-id-rec}.

\begin{Prop} \label{Prop-Pi01-levels}
Suppose $\mathcal{C} = \varnothing$. Then there exists a computable function from $\mathcal{N}$ to $\mathbb{N}$ that, given any $n \in \mathcal{N}$, returns
the G\"{o}del num\-ber of a $\Pi^0_1$-formula defining
\[
\left\{ \sharp s \mid \eta \left( s \right) = n ~ \text{and} ~ s ~ \text{is derivable in} ~ !\mathrm{ACT}_{\omega} \right\}
\]
in the standard model $\mathfrak{N}$ of arithmetic.
\end{Prop}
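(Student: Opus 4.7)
The plan is to follow the template of Proposition~\ref{prop-id-rec}, but with transfinite recursion along the computable well-order $(\Nc, \prec)$ (of order type $\omega^{\omega}$) in place of Kleene's $\mathcal{O}$. The central observation is that, in the absence of contraction, every generalized rule strictly decreases the $\eta$-rank: if $s$ is the conclusion of a generalized rule application and $s'$ is any premise, then $\eta(s') \prec \eta(s)$. Consequently, a sequent of rank exactly $n$ is derivable if and only if some generalized rule yields it from premises of strictly smaller ranks, each of which is itself derivable. This is the inductive clause the recursion on $\prec$ will unfold, and thanks to cut elimination (Theorem~\ref{Th:cutelim}) it suffices to consider cut-free derivations.

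First, we describe a single step of the recursion. Given $n \in \Nc$, together with G\"{o}del numbers of $\Pi^0_1$-formulas $\Psi_m(x)$ defining $\{\sharp s \mid \eta(s) = m \text{ and } s \text{ is derivable}\}$ for every $m \prec n$, the aim is to output the G\"{o}del number of a $\Pi^0_1$-formula $\Phi_n(x)$ asserting: $x = \sharp s$ for some $s$ with $\eta(s) = n$, and either
\begin{enumerate}
\item[(i)] there exist $k \in \NN$ and $y = \sharp_{\ast}(s_1, \ldots, s_k)$ with $R_{\ast}(\sharp s, y)$ such that for each $i$ we have $\eta(s_i) \prec n$ and $\Psi_{\eta(s_i)}(\sharp s_i)$ holds, or
\item[(ii)] there exists $y = \sharp_{\bullet}(s_0, s_1, \ldots)$ with $R_{\bullet}(\sharp s, y)$ such that $\Psi_{\eta(s_m)}(\sharp s_m)$ holds for every $m \in \NN$.
\end{enumerate}
The crucial point is that in (i) the existential over $y$ is \emph{effectively bounded}: from $\sharp s$ and the syntactic shape of the generalized finitary rules one can compute a finite list of all premise sequences that could yield $s$, so (i) reduces to a finite disjunction of conjunctions of $\Pi^0_1$-formulas. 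In (ii), only finitely many $y$ correspond to the choice of which occurrence of $A^*$ in the antecedent of $s$ is unfolded by $(\KStar\yields)_{\omega}$, and for each such choice the formula is a universal quantification over a $\Pi^0_1$ statement; so (ii) is also $\Pi^0_1$. The resulting $\Phi_n$ is uniformly $\Pi^0_1$, and its G\"{o}del number is computable from $n$ and from the supplied $\sharp \Psi_m$'s.

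Next, we invoke the recursion theorem, just as in Proposition~\ref{prop-id-rec}. Using $s$-$m$-$n$ we assemble a total computable $g$ such that, given an index $e$ (thought of as a tentative map $n \mapsto \sharp \Phi_n$), the function $\ae_{g(e)}(n)$ carries out the one-step construction above, calling $\ae_e(m)$ for each smaller rank $m \prec n$. A fixed point $c$ with $\ae_c = \ae_{g(c)}$ then provides the desired computable function $\Nc \to \NN$; a transfinite induction along $\prec$ confirms that, for every $n \in \Nc$, the formula coded by $\ae_c(n)$ defines $\{\sharp s \mid \eta(s) = n \text{ and } s \text{ is derivable}\}$ in $\mathfrak{N}$.

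The main obstacle, and the reason the argument requires $\Cc = \varnothing$, is keeping the existential quantifier in clause~(i) genuinely bounded. This is possible precisely because, without contraction, the generalized rules are both rank-strictly-decreasing and syntax-directed, so the list of all potential premise tuples for a given conclusion can be computed directly from $\sharp s$. Non-local contraction breaks both properties and allows a premise to share the rank of the conclusion, which is consistent with the jump to $\Pi^1_1$-hardness established in Theorem~\ref{Cne0-hard}.
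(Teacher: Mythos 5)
Your proposal is correct and follows essentially the same route as the paper: a one-step construction whose finitary clause exploits the computable, finite list of candidate premise tuples (the paper packages this as the bounds $\rho_{\ast}$, $\rho_{\bullet}$) and whose $\omega$-rule clause stays $\Pi^0_1$ via the universal satisfaction predicate, tied together by the recursion theorem and verified by transfinite induction along $\prec$. The only point to tighten is your phrase about ``calling $\ae_e(m)$ for each smaller rank'': for the $\omega$-rule the premises realize infinitely many ranks, so the recursive call must be arithmetized inside the formula itself via the graph of $\ae$ together with $\Pi^0_1\text{-}\mathrm{SAT}$ (exactly as in Proposition~\ref{prop-id-rec}), rather than executed during the computation of the G\"odel number.
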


\begin{proof}
Observe that for any sequent $s$,
\[
{\left\{ {\sharp_{\ast} \left( s_1, \ldots, s_n \right)} \mid R_{\ast} \left( {\sharp s} , {\sharp_{\ast} \left( s_1, \ldots, s_n \right)} \right) \right\}}
\quad \text{and} \quad
{\left\{ {\sharp_{\bullet} \left( s_0, s_1, \ldots \right)} \mid R_{\bullet} \left( {\sharp s} , {\sharp_{\bullet} \left( s_0, s_1, \ldots\; \right)} \right) \right\}}
\]
are finite sets, which are computable uniformly in $\sharp s$. Hence we have computable functions $\rho_{\ast}$ and $\rho_{\bullet}$ from $\mathbb{N}$ to $\mathbb{N}$ such that for each sequent $s$,
\begin{align*}
{\rho_{\ast} \left( \sharp s \right)}\ &=\
{\max \left( \left\{ {\sharp_{\ast} \left( s_1, \ldots, s_n \right)} \mid R_{\ast} \left( {\sharp s} , {\sharp_{\ast} \left( s_1, \ldots, s_n \right)} \right) \right\} \cup \left\{ 0 \right\} \right)} ,\\
{\rho_{\bullet} \left( \sharp s \right)}\ &=\
{\max \left( \left\{ {\sharp_{\bullet} \left( s_0, s_1, \ldots\; \right)} \mid R_{\bullet} \left( {\sharp s} , {\sharp_{\bullet} \left( s_0, s_1, \ldots\; \right)} \right) \right\} \cup \left\{ 0 \right\} \right)} .
\end{align*}
Let $\dot{\eta}$ be a computable function from $\mathbb{N}$ to $\mathcal{N}$ such that $\dot{\eta} \left( \sharp s \right) = \eta \left( s \right)$ for all $s \in \mathrm{Seq}$. Now take $\Theta \left( x, y, z \right)$ to be a $\Sigma^0_1$-formula defining the graph of our universal partial computable function $\ae$ in $\mathfrak{N}$ (cf.\ the proof of Proposition~\ref{prop-id-rec}), and for every $e \in \mathbb{N}$, let
\begin{align*}
{\Psi^{\ast}_e \left( x \right)}\ &:=\
{\left( \exists y \leqslant \rho_{\ast} \left( x \right) \right)}\,
{\left(
{R_{\ast} \left( x, y \right)} \wedge
{\forall u}\,
{\left(
{E_{\ast} \left( u, y \right)} \rightarrow
{\forall z}\,
{\left(
{\Theta \left( \underline{e}, {\dot{\eta} \left( u \right)}, z \right)}
\rightarrow
{\Pi^0_1  \text{-} \mathrm{SAT} \left( u, z \right)}
\right)}
\right)}
\right)} ,\\
{\Psi^{\bullet}_e \left( x \right)}\ &:=\
{\left( \exists y \leqslant \rho_{\bullet} \left( x \right) \right)}\,
{\left(
{R_{\bullet} \left( x, y \right)} \wedge
{\forall u}\,
{\left(
{E_{\bullet} \left( u, y \right)} \rightarrow
{\forall z}\,
{\left(
{\Theta \left( \underline{e}, {\dot{\eta} \left( u \right)}, z \right)}
\rightarrow
{\Pi^0_1  \text{-} \mathrm{SAT} \left( u, z \right)}
\right)}
\right)}
\right)} .
\end{align*}
Evidently, there exists a computable two-place function $f$ that, given any $e$ and $n$, returns the G\"{o}del number of a $\Pi^0_1$-formula that is
equivalent to
\[
{
{\dot{\eta} \left( x \right) = \underline{n}} \wedge
\left( \Psi^{\ast}_e \left( x \right) \vee \Psi^{\bullet}_e \left( x \right) \right)
} .
\]
Now let $h$ be a total computable function from $\mathbb{N}$ to $\mathbb{N}$ which satisfies
\[
{\ae_{h \left( e \right)} \left( n \right)}\ =\ {f \left( e, n \right)}
\]
for all $e \in \mathbb{N}$ and $n \in \mathcal{N}$. By the recursion theorem, we have $\ae_{h \left( c \right)} = \ae_c$ for some $c \in \mathbb{N}$. As can easily be checked, the function $\ae_c$ does the job.
\end{proof}

\begin{Cor}
Suppose $\mathcal{C} = \varnothing$. Then the derivability problem for $! \mathrm{ACT}_{\omega}$ is $\Pi^0_1$-bounded.
\end{Cor}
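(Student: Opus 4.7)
The plan is to derive the corollary from Proposition~\ref{Prop-Pi01-levels} by a short uniformity argument. The proposition already supplies, uniformly in the complexity parameter $n \in \mathcal{N}$, a $\Pi^0_1$-definition of the set of G\"odel numbers of derivable sequents whose complexity equals $n$; since every sequent $s$ has a computable complexity $\eta(s)$, all that remains is to glue these fibrewise $\Pi^0_1$-definitions into a single $\Pi^0_1$-definition of the whole derivability set.

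Concretely, let $g: \mathcal{N} \to \mathbb{N}$ be the computable function delivered by Proposition~\ref{Prop-Pi01-levels} and let $\dot{\eta}: \mathbb{N} \to \mathbb{N}$ be the computable lift of $\eta$ to G\"odel numbers (under the fixed effective bijection between $\mathcal{N}$ and $\mathbb{N}$). Then $h := g \circ \dot{\eta}$ is a total computable function, and for every sequent $s$ we have
\[
{s \text{ is derivable in } \eACTomega}\ \Longleftrightarrow\
{\mathfrak{N} \vDash \Pi^0_1 \text{-} \mathrm{SAT} (\sharp s, h(\sharp s))} ,
\]
because $h(\sharp s)$ is the G\"odel number of a $\Pi^0_1$-formula that holds at $\sharp s$ precisely when $s$ is derivable of complexity $\eta(s)$.

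To turn the right-hand side into a single $\Pi^0_1$-formula in $x$, I would fix a $\Sigma^0_1$-formula $\Theta_h(x, y)$ defining the graph of $h$ in $\mathfrak{N}$ and set
\[
{\Phi(x)\ :=\ \forall y\,\bigl(\Theta_h(x, y) \rightarrow \Pi^0_1 \text{-} \mathrm{SAT} (x, y)\bigr)} .
\]
Totality of $h$ makes $\Phi(x)$ equivalent to $\Pi^0_1 \text{-} \mathrm{SAT} (x, h(x))$, and $\Phi$ is manifestly $\Pi^0_1$: $\neg \Theta_h$ is $\Pi^0_1$, $\Pi^0_1 \text{-} \mathrm{SAT}$ is $\Pi^0_1$, and both disjunction and universal quantification preserve $\Pi^0_1$. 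I do not anticipate any real obstacle here: the computational heavy lifting~--- the passage through the ranks induced by $\eta$~--- has already been carried out in Proposition~\ref{Prop-Pi01-levels}, so only this packaging step remains.
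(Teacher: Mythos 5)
Your proposal is correct and follows essentially the same route as the paper: the paper's proof likewise defines the derivability set by $\Pi^0_1 \text{-} \mathrm{SAT}\left(x, g\left(\dot{\eta}(x)\right)\right)$ for the function $g$ from Proposition~\ref{Prop-Pi01-levels} and notes that this is equivalent to a $\Pi^0_1$-formula. Your explicit packaging via the graph of $h = g \circ \dot{\eta}$ merely spells out that last equivalence, which the paper leaves implicit (relying on its convention that the arithmetical signature contains symbols for all total computable functions).
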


\begin{proof}
Take $g$ to be the function that exists by Proposition~\ref{Prop-Pi01-levels}. Then
\[
\left\{
{\sharp s} \mid s ~ \text{is derivable in} ~ ! \mathrm{ACT}_{\omega}
\right\}
\]
can be defined in $\mathfrak{N}$ by $\Pi^0_1 \text{-} \mathrm{SAT} \left( x, {g \left( \dot{\eta} \left( x \right) \right)} \right)$ (which is equivalent
to a $\Pi^0_1$-formula).
\end{proof}


\section{Conclusion and future work}\label{sec-conclusion}

We have established exact complexity bounds for infinitary action logic extended with subexponentials, at least one of which allows the non-local contraction rule (i.e.\ $\Cc \ne \varnothing$). The bounds are established both in the sense of complexity of the derivability problem and in the sense of the closure ordinal of the corresponding derivability operator.
In the case where no subexponential allows contraction (i.e.\ $\Cc = \varnothing$), we have also established a tight complexity bound for the derivability problem, namely, we have shown that it is $\Pi^0_1$-complete. As for the closure ordinal, finding its exact value in the case when $\Cc = \varnothing$ remains an open problem.  However, we have established an upper bound of $\omega^\omega$, which allowed us to prove the upper $\Pi^0_1$ complexity bound in a more invariant way than the ones used for infinitary action logic before.

In the view of the huge complexity gap between $\Pi^0_1$ and $\Pi_1^1$, it appears to be an interesting direction of research to find fragments of ${!}\mathrm{ACT}_\omega$ of intermediate complexity. While the present article was under review, this study began in~\cite{Kuzn2021Tableaux} by considering a system where $\Cc$ is non-empty, but $^*$ is not allowed under ${!}^c$ for $c \in \Cc$. This system is $\Pi^0_2$-hard and $\Delta_1^1$-bounded~\cite{Kuzn2021Tableaux}; thus it is indeed an intermediate one. Exact complexity of this fragment, in terms of both the derivability problem and the corresponding operator, however, remains an open question.

The situation for such weaker fragments of ${!}\mathrm{ACT}_\omega$ still requires further study.

%

Another closely related system arises if one replaces contraction with a weaker rule called {\em multiplexing:}
$$
\infer[(\mathrm{mult})]
{\Gamma, {!} A, \Delta \yields C}
{\Gamma, \overbrace{A, \ldots, A}^{\text{$n$ times}}, \Delta \yields C}
$$
An extension of the multiplicative-additive Lambek calculus with a subexponential allowing such a rule, and another subexponential allowing only permutation, was considered in~\cite{KanKuzSceIJCAR2020}. Even without the Kleene star, this system is $\Sigma_1^0$-hard~\cite{KanKuzSceIJCAR2020}. Therefore, in the presence of the Kleene star governed by an $\omega$-rule, the system will be strictly above $\Pi_1^0$. On the other hand, we conjecture that the closure ordinal for this system is less than or equal to $\omega^\omega$. This would give a hyperarithmetical upper bound on the complexity of the derivability problem; thus this will be another example of a system with intermediate complexity. Exact complexity, again, is an open problem.

\medskip
\paragraph{Acknowledgments.}
The authors are grateful to the referees for valuable comments and suggestions.
The work was supported by the Russian Science Foundation, in cooperation with the Austrian Science Fund, under grant RSF--FWF 20-41-05002.


\smallskip
\begin{flushleft}

\end{flushleft}


\vspace{10mm}
\noindent
{\small
\noindent
{\scshape Stepan L.\ Kuznetsov}\smallskip\\
Steklov Mathematical Institute\\
8 Gubkina St.,\\
Moscow 119991, Russia\smallskip\\
{\ttfamily sk@mi-ras.ru}\smallskip\\
\href{https://orcid.org/0000-0003-0025-0133}{\sffamily ORCID:\,0000-0003-0025-0133}\\

\bigskip
\noindent
{\scshape Stanislav O.\ Speranski}\smallskip\\
Steklov Mathematical Institute\\
8 Gubkina St.,\\
Moscow 119991, Russia\smallskip\\
{\ttfamily katze.tail@gmail.com}\smallskip\\
\href{https://orcid.org/0000-0001-6386-5632}{\sffamily ORCID:\,0000-0001-6386-5632}
}


\end{document}